\begin{document}
%
\title{Performance Analysis of Interference-Limited Three-Phase Two-Way Relaying with \\ Direct Channel}
%
%
%

\author{Xiaochen~Xia,
        Dongmei~Zhang,
        Kui~Xu,~\IEEEmembership{Member,~IEEE},
        Youyun~Xu,~\IEEEmembership{Senior~Member,~IEEE}
\thanks{This work is supported by the Jiangsu Province Natural Science Foundation under Grant BK2011002, National Natural Science Foundation of China for Young Scholar (No. 61301165), National Natural Science Foundation of China (No. 61371123) and Jiangsu Province Natural Science Foundation for Young Scholar under Grant BK2012055. This work has been presented in part at the IEEE Wireless Communications and Networking Conference (WCNC), Shanghai, China, Apr. 2013.}
\thanks{X. Xia, D. Zhang, K. Xu and Y. Xu are with the Institute of Communication Engineering, PLA University of Science and Technology (email: tjuxiaochen@gmail.com, zhangdm72@163.com, lgdxxukui@gmail.com, yyxu@vip.sina.com). }}

\maketitle

\begin{abstract}

This paper investigates the performance of interference-limited three-phase two-way relaying with direct channel between two terminals in Rayleigh fading channels. The outage probability, sum bit error rate (BER) and ergodic sum rate are analyzed for a general model that both terminals and relay are corrupted by co-channel interference. We first derive the closed-form expressions of cumulative distribution function (CDF) for received signal-to-interference-plus-noise ratio (SINR) at the terminal. Based on the results for CDF, the lower bounds, approximate expressions as well as the asymptotic expressions for outage probability and sum BER are derived in closed-form with different computational complexities and accuracies. The approximate expression for ergodic sum rate is also presented. With the theoretic results, we consider the optimal power allocation at the relay and optimal relay location problems that aiming to minimize the outage and sum BER performances of the protocol. It is shown that jointly optimization of power and relay location can provide the best performance. Simulation results are presented to study the effect of system parameters while verify the theoretic analysis. The results show that three-phase TWR protocol can outperform two-phase TWR protocol in ergodic sum rate when the interference power at the relay is much larger than that at the terminals. This is in sharp contrast with the conclusion in interference free scenario. Moreover, we show that an estimation error on the interference channel will not affect the system performance significantly, while a very small estimation error on the desired channels can degrade the performance considerably.

\end{abstract}

\begin{IEEEkeywords}
Interference-limited, three-phase TWR protocol, outage probability, sum bit error rate, ergodic sum rate, power allocation, relay location.
\end{IEEEkeywords}

%
\IEEEpeerreviewmaketitle

\section{Introduction}

In recent years, relaying has been accepted by several standards such as IEEE 802.11s, IEEE 802.16j and LTE-Advanced as a powerful technique to provide spatial diversity in cooperation systems and extend the coverage of the wireless networks. However, as shown in \cite{IEEEhowto:Kim2011}, the employment of relay doubles the required channels for transmission from source to destination due to the half-duplex constraint, which induces the spectral efficiency loss.

To improve the spectral efficiency, two-way relaying (TWR) or bi-directional relaying, which employs the idea of network coding (NC), has been investigated in \cite{IEEEhowto:Upadhyay}-\cite{IEEEhowto:Yi2009TWC}. In TWR, two terminals transmit their signals to a relay in one or two phases, and then the relay broadcasts the combination of the information extracted from the received signals. In \cite{IEEEhowto:Upadhyay},\cite{IEEEhowto:Kim_1}, the authors studied the two-phase TWR (2P-TWR) protocol with an amplify-and-forward (AF) relay. Wherein, the outage probability and diversity-multiplexing tradeoff have been analyzed. The performance and relay selection strategy of 2P-TWR protocol with multiple mobile relays were studied in \cite{IEEEhowto:Xiaochen2013IET}. The AF-based three-phase TWR (3P-TWR) protocol has been analyzed in \cite{IEEEhowto:Kim_3}, where the expression of outage probability has been obtained and the optimal power allocation scheme at the relay has been presented. In \cite{IEEEhowto:YadavCL2013}, the authors analyzed the performance of optimal relay selection for 3P-TWR protocol in Nakagami-$m$ fading channels and presented the closed-form expression for outage probability. In \cite{IEEEhowto:Yi2009TWC}, the authors showed that, in the interference free scenario, the 2P-TWR protocol outperforms 3P-TWR protocol in ergodic sum rate, while the 3P-TWR protocol performs better in outage and BER performances when the direct channel between two terminals exists.

In practical wireless network, signals of terminals (or relay) are often corrupted by co-channel interference (CCI) from other sources that share the same frequency resources in wireless networks \cite{IEEEhowto:Hoeher}. Moreover, CCI often dominates AWGN in wireless networks with dense frequency reuse. Therefore, it is necessary to take the effect of CCI into serious consideration in the analysis and design of the practical TWR protocol. Some of the previous studies have investigated the performance of TWR protocol in the interference-limited scenario. For example, the outage and BER performances of single terminal for two-phase AF-based TWR protocol have been analyzed in \cite{IEEEhowto:Ikki} for the interference-limited scenario. But this work considered only the special case that all interferers have the identical average interference power and the interference channels are independent identically distributed. In \cite{IEEEhowto:Liang2012}, the authors investigated the 2P-TWR in a more general scenario where interferers have different average interference powers. The expression of system outage probability \cite{IEEEhowto:Kim_1} was derived. In \cite{IEEEhowto:Anup2013}, the system outage performance of AF-based TWR protocol was analyzed using the a novel geometric method. Very recently, the effect of CCI was analyzed for TWR protocol in Nakagami-$m$ fading channels and the optimal resource allocation scheme was developed \cite{IEEEhowto:Soleimani2013TCOM}.

However, to the best of the authors knowledge, none of the aforementioned publications considered the performance of 3P-TWR protocol in the interference-limited scenario. The 3P-TWR protocol is suitable for the scenarios where the reliability has a higher priority in the system. As a result, it is of great importance to analyze the effect of CCI on the 3P-TWR protocol. Moreover, in this paper, we will show that the 3P-TWR protocol may outperform 2P-TWR protocol in ergodic sum rate when the effect of the CCI is taken into consideration. This contradicts with the conclusion obtained in the interference free scenario.

In this work, we study the performance of three-phase AF-based TWR protocol with direct channel between two terminals (This protocol is also called time division broadcasting protocol in \cite{IEEEhowto:Kim_3}) in the interference-limited scenario. A general model that all nodes (terminals and relay) are interfered by a finite number of co-channel interferers in the independent but non-identical Rayleigh fading channels is considered. The contributions of this paper are summarized as follows:

\begin{itemize}
\item The lower bounds for outage probability and sum bit error rate (BER) with infinite series are derived, which are shown to provide excellent estimation to the exact results obtained by simulation.
\item The approximate expressions without infinite series and asymptotic expressions for outage probability and sum BER are derived, which are tight in the low and high SNR regions, respectively. The approximate expression for ergodic sum rate is also obtained.
\item The optimal power allocation at the relay and optimal relay location, which aiming to minimize the outage and sum BER performances, are studied based on the asymptotic analysis.
\end{itemize}

The rest of this paper is organized as follow. In the next section, we will describe the system model and present the expression for the received signal-to-interference-plus-noise ratio (SINR) at terminal. The cumulative distribution function (CDF) of the received SINR at terminal is determined in section III. The outage, sum BER and ergodic sum rate performances are analyzed in section IV, section V and section VI, respectively. The optimal power allocation and relay location problems are studied in section VII. Simulation results are presented in section VIII and some conclusions will be drawn in the last section.

\section{System Model}
\begin{figure}[t]
\centering
\includegraphics[width=7.5cm]{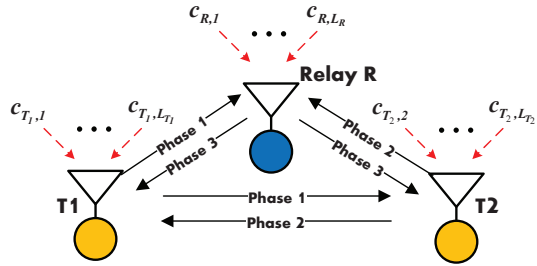}
\caption{The 3P-TWR protocol corrupted by a finite number of co-channel interferers.} \label{fig:graph}
\end{figure}

{\footnotetext[1]{In the scenarios where some interferers have multiple antennas, the channel coefficients of different interference channels may be correlated. However, this is beyond the scope of the current work and will be considered in the future study. Similar to \cite{IEEEhowto:Ikki}-\cite{IEEEhowto:Soleimani2013TCOM}, we assume the interferer has single antenna and the distances between the interferers are large enough. As a result, the channel coefficients of different interference channels are independent.}}

We consider the TWR network which consists of two terminals and a relay, as shown in Fig. 1, where terminals $T_1$ and $T_2$ wish to exchange information with the help of the relay $R$. Each node is equipped with a single antenna and operates in the half-duplex mode. It is assumed that both terminals and relay are interfered by a finite number of co-channel interferers. Here we let $L_R$, $L_{T_1}$ and $L_{T_2}$ denote the total numbers of interferers that affect node $R$, $T_1$ and $T_2$, respectively. Let $h_0$, $h_1$ and $h_2$ denote the channel coefficients between $T_1$ and $T_2$, $T_1$ and $R$, and $T_2$ and $R$ with variances $\Omega_0=d_{T_1,T_2}^{-v}$, $\Omega_1=d_{T_1,R}^{-v}$ and $\Omega_2=d_{T_2,R}^{-v}$, respectively, where $d_{U,N}$ denotes the distance between nodes $U$ and $N$. $v$ denotes the path loss exponent. Let $c_{N,k}\in{\cal CN}(0,\Omega_{N,k})$ denote the channel coefficient between node $N\in\{T_1,T_2,R\}$ and the $k$th interferer that affects $N$. All the channels are assumed to be reciprocal and {independent\footnotemark[1]} Rayleigh fading and the channel coefficients do not change within one round of data exchange.

One round of data exchange between two terminals can be achieved within three phases, i.e., $T_1$ transmits during the first phase, while $T_2$ and $R$ listen. In the second phase, $T_2$ transmits while $T_1$ and $R$ listen. The received signals at the relay during the first two phases can be expressed as
\begin{equation}
\begin{array}{l}
 y_R^{[1]} = \sqrt {{P_1}} {h_1}{S_1} + \sum_{k = 1}^{{L_R}} {\sqrt {{P_{I,R}}} {c_{R,k}}I_{R,k}^{[1]}}  + n_R^{[1]}, \\
 y_R^{[2]} = \sqrt {{P_2}} {h_2}{S_2} + \sum_{k = 1}^{{L_R}} {\sqrt {{P_{I,R}}} {c_{R,k}}I_{R,k}^{[2]}}  + n_R^{[2]}, \\
\end{array}
\end{equation}
where ${P_{I,N}}$ indicates the transmitted power of interferers that affect node $N\in\{T_1,T_2,R\}$. $S_i$ and $P_i$ ($i = 1,2,R$) denote the unit-power transmitted symbols and transmitted powers of nodes $T_1$, $T_2$ and $R$, respectively. $y_N^{[m]}$, ${\cal I}_{N,k}^{[m]}$ and $n_N^{[m]}\in{\cal {CN}}(0,1)$ represent the received signal, the unit-power interference signal of the $k$th interferer and the AWGN at node $N\in\{T_1,T_2,R\}$ during the $m$th phase, respectively, where $m\in\{1,2,3\}$. Meanwhile, the signals received by $T_1$ and $T_2$ during the first two phases can be written as
\begin{equation}
\begin{array}{l}
y_{{T_1}}^{[2]} = \sqrt {{P_{{2}}}} h_0{S_{{2}}} + \sum_{k = 1}^{{L_{T_1}}} {\sqrt {P_{I,{T_1}}} {c_{{T_1},k}}{\cal I}_{{T_1},k}^{[2]}}  + n_{{T_1}}^{[2]} \\
y_{{T_2}}^{[1]} = \sqrt {{P_{{1}}}} h_0{S_{{1}}} + \sum_{k = 1}^{{L_{T_2}}} {\sqrt {P_{I,{T_2}}} {c_{{T_2},k}}{\cal I}_{{T_2},k}^{[1]}}  + n_{{T_2}}^{[1]}
\end{array}
\end{equation}

In phase 3, $R$ broadcasts the combined information to $T_1$ and $T_2$. The combined signal can be written as $S_R \!=\! {\cal A}_1 y_R^{[1]} \!\!+\!\! {\cal A}_2 y_R^{[2]}$, where ${\cal A}_1$ and ${\cal A}_2$ denote the combining coefficients which can be determined {as\footnotemark[2]}
{\footnotetext[2]{Similar to \cite{IEEEhowto:Ikki} and \cite{IEEEhowto:Liang2012}, we assume $R$ knows the channel gains of links $T_1 \!\to\! R$, $T_2\!\to\! R$, and the instantaneous total interference power at $R$. Moreover, it is assumed $T_i$ knows the channel gains of links $T_1\!\to \!R$, $T_2\!\to\! R$, $T_1\!\to\! T_2$ and the instantaneous total interference powers at $R$ and $T_i$. The effect of channel state information (CSI) imperfection will be analyzed in simulations. Note that the performance based on the above assumptions can serve as a benchmark for other practical scenarios (e.g., the CSI estimation is imperfect).}}
\begin{equation}
{{\cal A}_i} = \sqrt {\frac{{{\omega _i}}}{{{\omega _1}{P_1}{{\left| {{h_1}} \right|}^2} + {\omega _2}{P_2}{{\left| {{h_2}} \right|}^2} + \sum_{k = 1}^{{L_R}} {{P_{I,R}}{{\left| {{c_{R,k}}} \right|}^2} + 1} }}}
\end{equation}
where $i\in \{1, 2\}$. $\omega _i\in \left(0,1\right)$ is the power allocation number adopted by the relay which satisfies ${\omega _1} + {\omega _2} = 1$. Then the received signal at $T_i$ during the third phase can be written as
\begin{equation}
\begin{array}{ll}
y_{{T_i}}^{[3]} = \sqrt {{P_R}} {h_i}{S_R} + \sum_{k = 1}^{{L_{{T_i}}}} {\sqrt{P_{I,{T_i}}}{c_{{T_i},k}}} {\cal I}_{{T_i},k}^{[3]} + n_{{T_i}}^{[3]}
\end{array}
\end{equation}

In the following, we assume equal power {allocation\footnotemark[3]} between $T_1$, $T_2$ and ${R}$, i.e., $P_1=P_2=P_R=P$. Since $T_i$ knows its own transmitted symbols, it can cancel the self-interference term in ${y_{T_i}^{[3]}}$. After performing maximal-ratio combining (MRC) on the received signals from direct channel and relay-to-terminal channel, the instantaneous SINR at $T_i$ can be tightly approximated as (See Appendix A)
{\footnotetext[3]{The assumption of equal power allocation does not make the analysis in this work lose generality because the variances of the channel coefficients between $T_1$, $T_2$ and $R$ can be different \cite{IEEEhowto:Ding},\cite{IEEEhowto:Lee}.}}
\begin{equation}
\Upsilon_{T_i} = {\Upsilon _{{T_i},D}} + \frac{{{\Upsilon _{{T_i},1}}{\Upsilon _{{T_i},2}}}}{{{\Upsilon _{{T_i},1}} + {\Upsilon _{{T_i},2}}}}
\end{equation}
where ${\Upsilon_{{T_i},D}} = \frac{{{\gamma _0}}}{{{\Gamma _{{T_i}}} + 1}}$ is the received SINR of channel $T_j\to T_i$. ${\Gamma _{{N}}}=\sum_{k=1}^{L_N}P_{I,N}\left|c_{N,k}\right|^2$ represents the total instantaneous interference power at node $N\in\{T_1,T_2,R\}$. $\gamma _i$ is defined as ${\gamma _i} \buildrel \Delta \over = P{\left| {{h_i}} \right|^2}$ with mean ${\bar \gamma _i} = {\mathbb E}({\gamma _i})$  and ${\mathbb E}(\cdot)$ indicates the expectation. $\Upsilon _{{T_i},1}$ and $\Upsilon _{{T_i},2}$ are given by
\begin{equation}
{\Upsilon _{{T_i},1}} = \frac{{{\gamma _i}}}{{{\Gamma _{{T_i}}} + 1}}, {\Upsilon _{{T_i},2}} = \frac{{{\omega _j}{\gamma _j}}}{{{\Gamma _R} + {\omega _i}{\Gamma _{{T_i}}} + {\omega _i} + 1}}
\end{equation}
where $i,j\in \{1,2\}$, $i\ne j$. Using the harmonic-to-min approximation, one can obtain the upper bound of $\Upsilon_{T_i}$, i.e.,
\begin{equation}
\Upsilon_{T_i}^{\rm{UB}} = {\Upsilon _{{T_i},D}} + \min \left\{ {{\Upsilon _{{T_i},1}},{\Upsilon _{{T_i},2}}} \right\}
\end{equation}

Note that MRC is suboptimal for the considered protocol in the interference-limited scenario. However, as shown in \cite{IEEEhowto:Shah2000},\cite{IEEEhowto:Chayawan2002}, the performance difference between MRC and optimal combining (OC) is not significant when the diversity branches are relatively small. As a result, we adopt MRC in this paper since the its performance is easier to analyze than OC, and meanwhile, it provides us a bound on the performance of OC.

\section{CDF of Received SINR at Terminal}
In this section, we derive the expressions of CDF for the received SINR at terminal. These will be used to derive the outage probability, sum BER and ergodic sum rate for the 3P-TWR protocol with CCI.

Attempting to derive the exact expression of CDF for received SINR at terminal in closed-form is challenging. Therefore, to make the analysis mathematically tractable, we consider the CDF for the upper bound derived in (7). Note that this CDF serves as a lower bound of that for exact received SINR at terminal. Without loss of generality, only the CDF of $\Upsilon_{T_1}^{\rm{UB}}$ will be derived and the CDF of $\Upsilon_{T_2}^{\rm{UB}}$ can be obtained vice versa. With the help of total probability theorem, the conditional CDF of $\Upsilon _{{T_1}}^{\rm{UB}}$ can be written as
\begin{equation}
\begin{aligned}
&   {F_{\gamma _{{T_1}}^{\rm{UB}}\left| {\left\{ {{\Gamma _R},{\Gamma _{{T_1}}}} \right\}} \right.}}\left( \gamma  \right) = \Pr \left( {\gamma _{{T_1}}^{\rm{UB}} < \gamma \left| {{\Gamma _R},{\Gamma _{{T_1}}}} \right.} \right)\\
&{} = 1 - \Pr \left( {\left. {{\Upsilon _{{T_1},D}} > \gamma } \right|{\Gamma _{{T_1}}}} \right) - \Pr \left( {\left. {{\Upsilon _{{T_1},D}} < \gamma ,{\Upsilon _{T_i}^m} > \gamma  - {\Upsilon _{{T_1},D}}} \right|{\Gamma _R},{\Gamma _{{T_1}}}} \right) \\
\end{aligned}
\end{equation}
The CDF of $\Upsilon _{{T_1}}^{\rm{UB}}$ can be obtained by averaging the conditional CDF with respect to the probability density functions (PDF) of $\Gamma_R$ and $\Gamma_{T_1}$, i.e.,
\begin{equation}
\begin{aligned}
 {F_{\Upsilon _{{T_1}}^{\rm{UB}}}}\left( \gamma  \right)  &= 1 - \int\limits_0^\infty  {{f_{{\Gamma _{{T_1}}}}}\left( t \right)\Pr \left( {\left. {{\Upsilon _{{T_1},D}} > \gamma } \right|{\Gamma _{T_1}}=t} \right)dt} \\
&{} - \int\limits_0^\infty  {\int\limits_0^\infty  {{f_{{\Gamma _R}}}\left( s \right)} } {f_{{\Gamma _{{T_1}}}}}\left( t \right) \Pr \left( {\left. {{\Upsilon _{{T_1},D}} < \gamma ,{\Upsilon _{T_i}^m} > \gamma  - {\Upsilon _{{T_1},D}}} \right|{\Gamma _R}=s,{\Gamma _{{T_1}}}=t} \right)dsdt\\
&{} \buildrel \Delta \over = 1 - {{\cal F}_1}\left( \gamma  \right) - {{\cal F}_2}\left( \gamma  \right)
\end{aligned}
\end{equation}
where ${{f_{X}}\left( x \right)}$ indicates the PDF of random variable (RV) $X$. Since ${\Gamma _N}$ ($N\in\{R,T_1,T_2\}$) is the sum of a finite number of exponential RVs with different means, the PDF of ${\Gamma _N}$ can be written {as\footnotemark[4]} \cite{IEEEhowto:Khuong}
{\footnotetext[4]{Herein, we will consider only the case $\xi_{N,i}\ne \xi_{N,k}$, $\forall i \ne k$. However, for the case $\exists  i \ne k$, $\xi_{N,i} = \xi_{N,k}$, the CDF can be derived in the similar way.}}
\begin{equation}
{f_{{\Gamma _N}}}\left( t \right) = \sum\limits_{k = 1}^{{L_N}} {{\phi _{N,k}}} \exp \left( { - \frac{1}{{{\xi_{N,k}}}}t} \right)
\end{equation}
where $\xi_{N,k}=P_{I,N}\Omega_{N,k}$ and ${\phi _{N,k}} = \prod_{i = 1,i \ne k}^{{L_N}} {\frac{1}{{{\xi_{N,k}} - {\xi_{N,i}}}}}$. Note that for the special case of $L_N=1$, ${f_{{\Gamma _N}}}\left( t \right)$ should be replaced by ${f_{{\Gamma _N}}}\left( t \right)= {\phi _{N,1}} \exp ( { - \frac{1}{{{\xi_{N,1}}}}t} )$, where $\phi _{N,1} = \frac{1}{\xi_{N,1}}$.

\newtheorem{lemma}{\emph{Lemma}}
\begin{lemma}\label{t1}
\emph{The closed-form expression of CDF for the SINR upper bound at $T_1$ can be expressed as}
\begin{equation}
\begin{array}{ll}
{F_{\Upsilon _{{T_1}}^{\rm{UB}}}}\left( \gamma  \right) = 1 -  \exp \left( { - \frac{\gamma }{{{{\bar \gamma }_0}}}} \right)\sum\limits_{j } {{\phi _{T_1,j}}} \frac{{{{\bar \gamma }_0}}}{{\gamma  + {{{{\bar \gamma }_0}} \mathord{\left/
{\vphantom {{{{\bar \gamma }_0}} {{\xi_{{T_1},j}}}}} \right.
\kern-\nulldelimiterspace} {{\xi_{{T_1},j}}}}}} \\
- \frac{{{\omega _2}}}{{{{\bar \gamma }_0}}}\sum\limits_j {\sum\limits_k {{\phi _{{T_1},j}}{\phi _{R,k}}} } \left( {{\rm{M}}\left( {1,1, - \frac{1}{{{{\bar \gamma }_2}}},\frac{1}{{{{\bar \gamma }_2}}},\frac{{{\omega _2}}}{{{\xi_{R,k}}}}} \right)} \right. + {\rm{M}}\left( {1,2, - \frac{1}{{{{\bar \gamma }_2}}},\frac{1}{{{{\bar \gamma }_2}}},\frac{{{\omega _2}}}{{{\xi_{R,k}}}}} \right) \\
+ {\rm{M}}\left( {\frac{1}{{{\omega _2}}},1,{\Phi _1},{\lambda _1},\frac{1}{{{\xi_{{T_1},j}}}}} \right)+ {\rm{M}}\left( {{\Phi _1}{{\bar \gamma }_2},2,{\Phi _1},{\lambda _1},\frac{1}{{{\xi_{{T_1},j}}}}} \right)\left. { + \Lambda \left( {{\lambda _1},{\lambda _2}} \right) - \Lambda \left( {\frac{1}{{{{\bar \gamma }_0}}},\frac{1}{{{{\bar \gamma }_0}}}} \right)} \right)
\end{array}
\end{equation}
\emph{where ${\Phi _i}$ and ${\lambda _i}$ are given by ${\Phi _i}= \frac{1}{{{{\bar \gamma }_0}}} - \frac{1}{{{{\bar \gamma }_1}}} - \frac{{{\omega _1} + i - 1}}{{{\omega _2}{{\bar \gamma }_2}}}$ and ${\lambda _i} = \frac{1}{{{{\bar \gamma }_1}}} + \frac{{{\omega _1}  + i - 1}}{{{\omega _2}}}\frac{1}{{{{\bar \gamma }_2}}}$, respectively. The functions ${\rm{M}}\left( {{\rho _1},{\rho _2},{\rho _3},{\rho _4},{\rho _5}} \right)$ and $\Lambda(\rho_1,\rho_2)$ are expressed {as\footnotemark[5]}}
{\footnotetext[5]{Unless explicitly stated, $\rm{y}$ and $\rm{y}(x_1,\cdots,x_n)$ will be used interchangeably to denote the function $\rm{y}(x_1,\cdots,x_n)$.}}
\begin{equation}
\begin{aligned}
{\rm{M}} &= \frac{{{\rho _1}\exp \left( { - {\lambda _2}\gamma } \right)}}{{{\rho _3}}}{\left( {\frac{1}{{{{\bar \gamma }_0}}}\gamma  + {\beta _{j,k}}} \right)^{ - {\rho _2}}} \sum\limits_{l = 0}^\infty  {{{\left( { - 1} \right)}^l}{{\left( {\frac{{{\rho _3}}}{{{\rho _4}\gamma  + {\rho _5}}}} \right)}^{l + 1}}\Phi _2^{ - l - 1}{\mathbb L}\left( {l + 1,{\Phi _2}\gamma } \right)}
\end{aligned}
\end{equation}
\emph{and}
\begin{equation}
{\Lambda} = \frac{{{{\bar \gamma }_2}}}{{\frac{1}{{{{\bar \gamma }_0}}}\gamma  + {\beta _{k,j}}}}\frac{1}{{{\rho _1}\gamma  + \frac{1}{{{\xi_{{T_1},j}}}}}}\exp \left( { - {\rho _2}\gamma } \right)
\end{equation}
\emph{where ${\mathbb L}(\cdot,\cdot)$ indicates the lower incomplete gamma function \cite{IEEEhowto:20}, ${\beta _{j,k}} = \frac{{{\omega _2}{{\bar \gamma }_2}{\Phi _1}}}{{{\xi_{R,k}}}} + \frac{1}{{{\xi_{{T_1},j}}}}$.}
\end{lemma}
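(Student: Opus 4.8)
The plan is to evaluate the two averaged terms in (9), namely ${\cal F}_1(\gamma)$ and ${\cal F}_2(\gamma)$, separately; the first is elementary and the second carries essentially all of the work. I would first dispatch ${\cal F}_1$. Conditioned on $\Gamma_{T_1}=t$, the event $\{\Upsilon_{T_1,D}>\gamma\}$ is $\{\gamma_0>\gamma(t+1)\}$, so since $\gamma_0$ is exponential with mean $\bar\gamma_0$ one has $\Pr(\Upsilon_{T_1,D}>\gamma\mid t)=\exp(-\gamma(t+1)/\bar\gamma_0)$. Substituting the PDF (10) of $\Gamma_{T_1}$ and integrating each exponential summand over $[0,\infty)$ gives $\exp(-\gamma/\bar\gamma_0)\sum_j\phi_{T_1,j}\bar\gamma_0/(\gamma+\bar\gamma_0/\xi_{T_1,j})$, which is exactly the second term of the claimed CDF.

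For ${\cal F}_2$ I would condition on $(\Gamma_R,\Gamma_{T_1})=(s,t)$ and exploit the independence of $\gamma_0,\gamma_1,\gamma_2$. Writing $u=\Upsilon_{T_1,D}=\gamma_0/(t+1)$, the conditional probability in (9) becomes $\int_0^\gamma\frac{t+1}{\bar\gamma_0}e^{-u(t+1)/\bar\gamma_0}\Pr(\min\{\Upsilon_{T_1,1},\Upsilon_{T_1,2}\}>\gamma-u\mid s,t)\,du$, where the min-exceedance probability factorises as $\exp(-(\gamma-u)(t+1)/\bar\gamma_1)\exp(-(\gamma-u)(s+\omega_1 t+\omega_1+1)/(\omega_2\bar\gamma_2))$. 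Since the integrand is nonnegative, the order of integration may be interchanged, so I perform the $s$- and $t$-averages first for fixed $u$: the $s$-integral is a plain exponential integral giving $\sum_k\phi_{R,k}/(1/\xi_{R,k}+(\gamma-u)/(\omega_2\bar\gamma_2))$, and the $t$-integral has the form $\int_0^\infty(t+1)e^{-\alpha_j t}\,dt=(1+\alpha_j)/\alpha_j^2$ with $\alpha_j=\Phi_1 u+1/\xi_{T_1,j}+\lambda_1\gamma$. Collecting the $u$- and $t$-linear coefficients is precisely what manufactures the constants $\Phi_i,\lambda_i$, and the leftover exponential collapses to $e^{-\lambda_2\gamma}e^{-\Phi_2 u}$ via $\lambda_2+\Phi_2=1/\bar\gamma_0$. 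This reduces ${\cal F}_2$ to the single integral
\[{\cal F}_2=\frac{e^{-\lambda_2\gamma}}{\bar\gamma_0}\sum_{j,k}\phi_{T_1,j}\phi_{R,k}\int_0^\gamma\frac{e^{-\Phi_2 u}}{P_k-cu}\Big(\frac{1}{\alpha_j}+\frac{1}{\alpha_j^2}\Big)\,du,\]
where $c=1/(\omega_2\bar\gamma_2)$ and $P_k=1/\xi_{R,k}+c\gamma$.

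The heart of the proof is this $u$-integral. I would split the product of the two linear denominators by partial fractions in $u$. The residue attached to $1/(P_k-cu)$ evaluates $\alpha_j^{-\rho_2}$ at the zero $u^\ast=\gamma+\omega_2\bar\gamma_2/\xi_{R,k}$ of $P_k-cu$, and the identity $\Phi_1+\lambda_1=1/\bar\gamma_0$ collapses $\alpha_j(u^\ast)$ to $\gamma/\bar\gamma_0+\beta_{j,k}$, which is exactly the prefactor $(\gamma/\bar\gamma_0+\beta_{j,k})^{-\rho_2}$ of the $\mathrm{M}$ functions. Each simple-pole piece $1/(P_k-cu)$ and $1/\alpha_j$ is then expanded as a geometric series in $u$ and integrated term by term through the master identity $\int_0^\gamma u^l e^{-\Phi_2 u}\,du=\Phi_2^{-(l+1)}\mathbb{L}(l+1,\Phi_2\gamma)$: the denominator $1/(P_k-cu)$ produces the two $\mathrm{M}$ terms with $\rho_5=\omega_2/\xi_{R,k}$, while $1/\alpha_j$ produces the two with $\rho_5=1/\xi_{T_1,j}$, the index $\rho_2\in\{1,2\}$ distinguishing the $1/\alpha_j$ and $1/\alpha_j^2$ sources. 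The two non-series terms $\Lambda$ come from the one genuine double pole: writing $1/\alpha_j^2=-\Phi_1^{-1}\frac{d}{du}(1/\alpha_j)$ and integrating by parts, the boundary contribution at $u=\gamma$ and $u=0$ becomes, after multiplication by $e^{-\lambda_2\gamma}$ and use of $\alpha_j(\gamma)=\gamma/\bar\gamma_0+1/\xi_{T_1,j}$ together with $\lambda_2+\Phi_2=1/\bar\gamma_0$, exactly $\Lambda(\lambda_1,\lambda_2)-\Lambda(1/\bar\gamma_0,1/\bar\gamma_0)$, while the leftover integral folds back into the $1/\alpha_j$ series.

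The hard part will be the bookkeeping in this last step: assembling the partial-fraction residues, geometric expansions and incomplete-gamma integrals into precisely four $\mathrm{M}$ terms and two $\Lambda$ terms, since several $1/\alpha_j$ contributions (from the $1/\alpha_j$ pole, from the middle partial-fraction term of the double pole, and from the integration-by-parts remainder) must be merged, and in particular tracking how the factor $e^{-\Phi_2\gamma}$ buried inside $\mathbb{L}(l+1,\Phi_2\gamma)$ recombines with $e^{-\lambda_2\gamma}$ into the $e^{-\gamma/\bar\gamma_0}$ of $\Lambda(1/\bar\gamma_0,1/\bar\gamma_0)$. I would also need to justify interchanging each infinite series with the finite $u$-integral; using $\mathbb{L}(l+1,\Phi_2\gamma)<(\Phi_2\gamma)^{l+1}/(l+1)$, the relay-indexed series is dominated by a geometric series of ratio $\gamma/(\gamma+\omega_2\bar\gamma_2/\xi_{R,k})<1$ and converges unconditionally, whereas the terminal-indexed series converges under the effective ratio $|\Phi_1|\gamma/(\lambda_1\gamma+1/\xi_{T_1,j})<1$, which I would check holds in the regime of interest.
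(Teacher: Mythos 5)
Your proposal follows essentially the same route as the paper's Appendix B: the same split into ${\cal F}_1$ and ${\cal F}_2$, integrating out $s$ and $t$ first, partial fractions on the two linear factors with the residue identity $\Phi_1+\lambda_1=1/\bar\gamma_0$ producing the $(\gamma/\bar\gamma_0+\beta_{j,k})^{-\rho_2}$ prefactors, integration by parts on the double pole yielding the two $\Lambda$ terms, and the termwise geometric expansion integrated via $\int_0^\gamma z^l e^{-\Phi_2 z}\,dz=\Phi_2^{-(l+1)}\mathbb{L}(l+1,\Phi_2\gamma)$. Your added convergence check on the series interchange (unconditional for the relay-indexed terms, conditional for the terminal-indexed ones) is a point the paper silently omits, but it does not change the argument.
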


\begin{proof}
See Appendix B.
\end{proof}

As shown in Lemma 1, since the expression of ${F_{\Upsilon _{{T_1}}^{\rm{UB}}}}\left( \gamma  \right)$ is given in a series form (introduced by ${\rm{M}}\left( {{\rho _1},{\rho _2},{\rho _3},{\rho _4},{\rho _5}} \right)$), more terms should be adopted in the calculation to obtain a higher accuracy, which leads to higher computational load. To alleviate the complexity, we present the approximate expression without infinite series and asymptotic expression of CDF for the SINR upper bound in the below.

\begin{lemma}\label{t1}
\emph{The approximate expression of CDF for ${\Upsilon _{{T_1}}^{\rm{UB}}}$ (denoted by ${F_{\Upsilon _{{T_1}}^{\rm{UB}}}^{\rm{App}}}\left( {{\gamma }} \right)$) can be obtained by replacing the function ${\rm{M}}\left( {{\rho _1},{\rho _2},{\rho _3},{\rho _4},{\rho _5}} \right)$ in Lemma 1 with ${\rm{\tilde M}}\left( {{\rho _1},{\rho _2},{\rho _3},{\rho _4},{\rho _5}} \right)$ which is given by}
\begin{equation}
{\rm{\tilde M}} = \frac{{{\rho _1}\exp \left( { - {\lambda _2}\gamma } \right)}}{{{\Phi _2}}}{\left( {\frac{1}{{{{\bar \gamma }_0}}}\gamma  + {\beta _{j,k}}} \right)^{ - {\rho _2}}}\frac{{1 - \exp \left( { - {\Phi _2}\gamma} \right)}}{{\left( {{\rho _3} + {\rho _4}} \right)\gamma  + {\rho _5}}}
\end{equation}
\end{lemma}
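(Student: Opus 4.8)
The plan is to reduce the statement to an approximation of the single infinite series that separates $\mathrm{M}$ from $\tilde{\mathrm{M}}$. Both functions carry the identical prefactor $\rho_1\exp(-\lambda_2\gamma)\left(\frac{1}{\bar\gamma_0}\gamma+\beta_{j,k}\right)^{-\rho_2}$; the only difference is that $\mathrm{M}$ multiplies it by $\frac{1}{\rho_3}S$, with $S=\sum_{l=0}^{\infty}(-1)^{l}\left(\frac{\rho_3}{\rho_4\gamma+\rho_5}\right)^{l+1}\Phi_2^{-l-1}\mathbb{L}(l+1,\Phi_2\gamma)$, whereas $\tilde{\mathrm{M}}$ multiplies it by $\frac{1}{\Phi_2}\cdot\frac{1-\exp(-\Phi_2\gamma)}{(\rho_3+\rho_4)\gamma+\rho_5}$. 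Hence it suffices to show $\frac{1}{\rho_3}S\approx\frac{1}{\Phi_2}\cdot\frac{1-\exp(-\Phi_2\gamma)}{(\rho_3+\rho_4)\gamma+\rho_5}$, after which multiplying through by the common prefactor reproduces $\tilde{\mathrm{M}}$.

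First I would reverse the series expansion that produced $S$. Writing the lower incomplete gamma function as $\mathbb{L}(l+1,\Phi_2\gamma)=\int_{0}^{\Phi_2\gamma}t^{l}\exp(-t)\,dt$, interchanging summation and integration, and summing the resulting geometric series in $(-1)^{l}\left(\frac{\rho_3 t}{(\rho_4\gamma+\rho_5)\Phi_2}\right)^{l}$ collapses $S$ to the closed form $S=\frac{\rho_3}{\rho_4\gamma+\rho_5}\int_{0}^{\Phi_2\gamma}\frac{\exp(-t)}{\Phi_2+\frac{\rho_3}{\rho_4\gamma+\rho_5}t}\,dt$. This is presumably the very integral that appears in Appendix B before the rational integrand is expanded; the term-by-term interchange is legitimate once $\frac{\rho_3 t}{(\rho_4\gamma+\rho_5)\Phi_2}<1$ on the integration range, and otherwise one simply takes this integral as the defining object and the series as its formal expansion.

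The decisive step is the approximation of the integral. Writing $a=\frac{\rho_3}{\rho_4\gamma+\rho_5}$, the rational factor $\frac{1}{\Phi_2+at}$ varies slowly in $t$ relative to the exponential weight $\exp(-t)$, so I would freeze it at the upper endpoint $t=\Phi_2\gamma$ and pull it outside the integral. The residual integral is elementary, $\int_{0}^{\Phi_2\gamma}\exp(-t)\,dt=1-\exp(-\Phi_2\gamma)$, and the frozen denominator simplifies as $\Phi_2+a\Phi_2\gamma=\Phi_2\cdot\frac{(\rho_3+\rho_4)\gamma+\rho_5}{\rho_4\gamma+\rho_5}$. Putting these together gives $\frac{1}{\rho_3}S=\frac{1}{\rho_4\gamma+\rho_5}\int_{0}^{\Phi_2\gamma}\frac{\exp(-t)}{\Phi_2+at}\,dt\approx\frac{1}{\Phi_2}\cdot\frac{1-\exp(-\Phi_2\gamma)}{(\rho_3+\rho_4)\gamma+\rho_5}$, which is exactly the required replacement.

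The main obstacle is to certify that freezing the denominator yields a genuine approximation and not merely a formal one. Because $\frac{1}{\Phi_2+at}$ is monotone in $t$, the integral is sandwiched as $\frac{1-\exp(-\Phi_2\gamma)}{\Phi_2+a\Phi_2\gamma}\le\int_{0}^{\Phi_2\gamma}\frac{\exp(-t)}{\Phi_2+at}\,dt\le\frac{1-\exp(-\Phi_2\gamma)}{\Phi_2}$, with $\tilde{\mathrm{M}}$ corresponding to the lower bracket. The spread between the two brackets is governed by $a\Phi_2\gamma$ and is small when the integration interval is short or when $a$ is small; since the weight $\exp(-t)$ concentrates the mass of the integral near $t=0$, the endpoint value is a tight surrogate, which I would quantify to confirm the accuracy claimed for the approximate expression (tightness in the low-SINR region). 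This error analysis, rather than the algebra, is the part demanding care.
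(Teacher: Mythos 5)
Your proposal is correct and is essentially the paper's own argument: the paper simply recalls the integral representation $\mathrm{M}=\rho_1\exp(-\lambda_2\gamma)\left(\frac{1}{\bar\gamma_0}\gamma+\beta_{j,k}\right)^{-\rho_2}\int_0^\gamma\frac{\exp(-\Phi_2 z)}{\rho_3 z+\rho_4\gamma+\rho_5}\,dz$ from Appendix~B (which your resummation of the series legitimately reconstructs) and then, exactly as you do, freezes the rational factor at the endpoint $z=\gamma$ so that only $\int_0^\gamma\exp(-\Phi_2 z)\,dz=(1-\exp(-\Phi_2\gamma))/\Phi_2$ remains. Your added monotonicity sandwich quantifying the endpoint-freezing error goes beyond what the paper provides (note only that the direction of the bracket flips when $\rho_3<0$, which does occur for some of the arguments used in Lemma~1).
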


\begin{proof}
Recall (51) in Appendix B, it is shown that the integral ${\rm{M}}\left( {{\rho _1},{\rho _2},{\rho _3},{\rho _4},{\rho _5}} \right)$ can be approximated by
\begin{equation}
\begin{aligned}
{\rm{M}} \approx {\rho _1}&\exp \left( { - {\lambda _2}\gamma } \right){\left( {\frac{1}{{{{\bar \gamma }_0}}}\gamma  + {\beta _{j,k}}} \right)^{ - {\rho _2}}} \frac{1}{{\left( {{\rho _1} + {\rho _2}} \right)\gamma  + {\rho _3}}}\int\limits_0^\gamma  {\exp \left( { - {\Phi _2}z} \right)} dz
\end{aligned}
\end{equation}
Solving the integral in the above, one can arrived at (14).
\end{proof}

\begin{lemma}\label{t1}
\emph{The asymptotic expression of CDF for ${\Upsilon _{{T_1}}^{\rm{UB}}}$ (denoted by $F_{\Upsilon _{{T_1}}^{{\rm{UB}}}}^{{\rm{Asy}}}\left( \gamma  \right) $) can be expressed as}
\begin{equation}
\begin{aligned}
F_{\Upsilon _{{T_1}}^{{\rm{UB}}}}^{{\rm{Asy}}}\left( \gamma  \right) & = \frac{{{\gamma ^2}}}{{2{{\bar \gamma }_0}}}\left( {\sum\limits_j {\sum\limits_k {{\phi _{{T_1},j}}{\phi _{R,k}}\frac{{\xi _{R,k}^2}}{{{\omega _2}{{\bar \gamma }_2}}}\left( {{\xi _{{T_1},j}} + \xi _{{T_1},k}^2} \right)} } } \right.\\
&{} \left. { + \sum\limits_j {{\phi _{{T_1},j}}} \left( {{\lambda _2}{\xi _{{T_1},j}} + \left( {{\lambda _1} + {\lambda _2}} \right)\xi _{{T_1},j}^2 + 2{\lambda _1}\xi _{{T_1},j}^3} \right)} \right)
\end{aligned}
\end{equation}
\end{lemma}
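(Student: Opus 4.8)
The plan is to obtain the claimed formula as the leading, order-$\gamma^2$, term of ${F_{\Upsilon _{{T_1}}^{\rm{UB}}}}\left( \gamma \right)$ as $\gamma\to 0$; this is exactly the high-SNR behaviour, since the resulting coefficient turns out to be $O(\bar\gamma^{-2})$, reflecting the diversity order two produced by the MRC of the direct and relay branches in (7). One could in principle Taylor-expand the closed form of Lemma~1 (or the series-free form of Lemma~2) about $\gamma=0$, but that route forces one to expand the incomplete gamma function, juggle the four ${\rm M}$-terms and the two $\Lambda$-terms, and then verify that all their $\gamma^0$ and $\gamma^1$ contributions cancel. I would instead work directly from the decomposition $\Upsilon_{T_1}^{\rm{UB}}=\Upsilon_{T_1,D}+\min\{\Upsilon_{T_1,1},\Upsilon_{T_1,2}\}$ in (7) and exploit the densities of these SINRs at the origin.

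First I would condition on $\Gamma_R=s$ and $\Gamma_{T_1}=t$. Using the definitions in (6) and the fact that each $\gamma_i=P|h_i|^2$ is exponential with mean $\bar\gamma_i$, the three conditional SINRs are scaled exponential RVs in the mutually independent gains $\gamma_0,\gamma_1,\gamma_2$, with densities at zero $f_{\Upsilon_{T_1,D}}(0)=(t+1)/\bar\gamma_0$, $f_{\Upsilon_{T_1,1}}(0)=(t+1)/\bar\gamma_1$ and $f_{\Upsilon_{T_1,2}}(0)=(s+\omega_1 t+\omega_1+1)/(\omega_2\bar\gamma_2)$. From $1-F_{\min}=(1-F_{\Upsilon_{T_1,1}})(1-F_{\Upsilon_{T_1,2}})$ one gets $f_{\min}(0)=f_{\Upsilon_{T_1,1}}(0)+f_{\Upsilon_{T_1,2}}(0)$, and since for two independent nonnegative RVs $X,Y$ with finite densities at the origin one has $F_{X+Y}(\gamma)=\tfrac12 f_X(0)f_Y(0)\gamma^2+o(\gamma^2)$, the conditional CDF satisfies
\begin{equation}
F_{\Upsilon_{T_1}^{\rm{UB}}\mid s,t}(\gamma)\approx \frac{\gamma^2}{2}\,\frac{t+1}{\bar\gamma_0}\left(\frac{t+1}{\bar\gamma_1}+\frac{s+\omega_1 t+\omega_1+1}{\omega_2\bar\gamma_2}\right),
\end{equation}
where the required conditional independence of $\Upsilon_{T_1,D}$ and the minimum holds because, with $s,t$ fixed, they depend on disjoint channel gains.

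It then remains to average the coefficient above over $s$ and $t$ against the PDFs in (10). I would use the moment identities $\mathbb{E}[t^n]=\sum_j\phi_{T_1,j}\,n!\,\xi_{T_1,j}^{n+1}$ and the analogue for $s$; in particular the normalization $\sum_j\phi_{T_1,j}\xi_{T_1,j}=1$, $\mathbb{E}[t]=\sum_j\phi_{T_1,j}\xi_{T_1,j}^2$ and $\mathbb{E}[t^2]=2\sum_j\phi_{T_1,j}\xi_{T_1,j}^3$. Expanding the product and grouping the $\bar\gamma_1$ and $\omega_2\bar\gamma_2$ contributions by powers of $t$ collapses them into $\lambda_1 t^2+(\lambda_1+\lambda_2)t+\lambda_2$, exactly by the definitions of $\lambda_1$ and $\lambda_2$; taking the expectation then yields the single sum in the statement, while the $s$-dependent cross term produces the double sum through $\mathbb{E}[s]=\sum_k\phi_{R,k}\xi_{R,k}^2$.

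The main obstacle I expect is the bookkeeping in recombining the $(s,t)$-polynomial into the compact $\lambda_1,\lambda_2$ form and in matching every moment identity; the one genuinely delicate analytic point, the vanishing of the $\gamma^0$ and $\gamma^1$ terms, is here automatic, since the density of a sum of two RVs with finite densities vanishes at the origin. A minor technical check is that the pointwise $o(\gamma^2)$ remainder stays $o(\gamma^2)$ after averaging, which follows from a uniform bound on the conditional expansions. Finally, should one prefer to begin from Lemma~2 instead, one should first confirm that replacing ${\rm M}$ by $\tilde{\rm M}$ leaves the coefficient of $\gamma^2$ unchanged, which it does because the two agree through order $\gamma^2$.
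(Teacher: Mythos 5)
Your proposal is correct, and it takes a genuinely different route from the paper. The paper's Appendix~C starts from the closed-form CDF of Lemma~1, performs a McLaurin expansion of ${F_{\Upsilon _{{T_1}}^{\rm{UB}}}}(\gamma)$ about $\gamma=0$, and handles the series form of ${\rm M}$ by differentiating its integral representation (51) under the integral sign via [28, 0.410]; the vanishing of the $\gamma^0$ and $\gamma^1$ terms and the final coefficient then emerge from algebraic manipulation. You instead work upstream of Lemma~1: conditioning on $\Gamma_R=s$, $\Gamma_{T_1}=t$, the three branch SINRs are independent exponentials, so the conditional CDF of the sum in (7) behaves as $\tfrac12 f_{\Upsilon_{T_1,D}}(0)\,f_{\min}(0)\,\gamma^2+o(\gamma^2)$ with $f_{\min}(0)=f_{\Upsilon_{T_1,1}}(0)+f_{\Upsilon_{T_1,2}}(0)$, and averaging the coefficient over $s,t$ with the moment identities $\sum_k\phi_{N,k}\xi_{N,k}^{n+1}=\tfrac1{n!}\mathbb{E}[\Gamma_N^n]$ gives the result. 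This is more elementary (no incomplete gamma expansions, no cancellation to verify, the absence of lower-order terms is structural) and it transparently reproduces the grouping into $\lambda_1 t^2+(\lambda_1+\lambda_2)t+\lambda_2$ as well as the form of Theorem~2; what it does not give you is the machinery for expanding ${\rm M}$ that the paper reuses implicitly elsewhere. One useful by-product of your derivation: the cross term comes out as $\xi_{T_1,j}+\xi_{T_1,j}^2$, consistent with the $(\Gamma'_R+1)(\Gamma'_{T_1}+1)$ factor in Theorem~2, which indicates that the subscript $k$ in $\xi_{T_1,k}^2$ in the displayed statement of Lemma~3 is a typo for $j$. The two technical points you flag (conditional independence of the direct and relayed branches, and uniform domination of the $o(\gamma^2)$ remainder so it survives the averaging, e.g.\ via $1-e^{-x}\le x$ and finiteness of $\mathbb{E}[\Gamma_N^2]$) are exactly the ones that need a sentence each in a polished write-up.
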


\begin{proof}
See Appendix C.
\end{proof}

\section{Outage Probability Analysis}

\subsection{Lower Bound and Approximate Analysis}
In two-way relaying, there are two opposite traffic flows: one is from $T_1$ via $R$ to $T_2$, and the other is from $T_2$ via $R$ to $T_1$. So the \emph{system outage probability} is an important and commonly-used metric to evaluate the system performance \cite{IEEEhowto:Upadhyay}-\cite{IEEEhowto:Kim_3}. The system outage probability of 2P-TWR protocol can be efficiently derived by the geometric method proposed by \cite{IEEEhowto:Anup2013}. Unfortunately, the method can hardly be used in this paper since the non-outage probability \cite{IEEEhowto:Anup2013} for 3P-TWR protocol can not be expressed as in the form (or similar form) of [12, Eq. 9] due to the presence of direct channel.

To circumvent this obstacle, we first consider the definition of system outage probability. The system outage event occurs when the mutual information at either of the terminals falls below the target rate, or equivalently, the received SINR at either of the terminals is below the target SINR $\gamma_{th}$. Then the system outage probability for the 3P-TWR protocol $P_{\rm{sys}}^{\cal O}\left( {{\gamma _{th}}} \right)$ can be written as
\begin{equation}
\begin{aligned}
& P_{{\rm{sys}}}^{\cal O}\left( {{\gamma _{th}}} \right) = \Pr \left( {{\Upsilon _{{T_1}}} < {\gamma _{th}} \cup {\Upsilon _{{T_2}}} < {\gamma _{th}}} \right)\\
&{} = \Pr \left( {{\Upsilon _{{T_1}}} < {\gamma _{th}}} \right) + \Pr \left( {{\Upsilon _{{T_2}}} < {\gamma _{th}}} \right) - \Pr \left( {{\Upsilon _{{T_1}}} < {\gamma _{th}},{\Upsilon _{{T_2}}} < {\gamma _{th}}} \right)\\
&{} \approx P_{{T_1}}^{\cal O}\left( {{\gamma _{th}}} \right) + P_{{T_2}}^{\cal O}\left( {{\gamma _{th}}} \right) - P_{{T_1}}^{\cal O}\left( {{\gamma _{th}}} \right)P_{{T_2}}^{\cal O}\left( {{\gamma _{th}}} \right)
\end{aligned}
\end{equation}
where $P_{{T_i}}^{\cal O}\left( {{\gamma _{th}}} \right)$ denotes the outage probability at terminal $T_i$ with target SINR ${\gamma _{th}}$. The third step is obtained by assuming ${\Upsilon _{{T_1}}}$ and ${\Upsilon _{{T_2}}}$ are independent. As will be shown in the next subsection, the above approximation gives rise to an upper bound to the exact system outage probability when the transmitted power goes into infinity. In the below, we employ the following performance metric
\begin{equation}
P_{\rm{pro}}^{\cal O}(\gamma_{th}) \buildrel \Delta \over = P_{{T_1}}^{\cal O}\left( {{\gamma _{th}}} \right) + P_{{T_2}}^{\cal O}\left( {{\gamma _{th}}} \right) - P_{{T_1}}^{\cal O}\left( {{\gamma _{th}}} \right)P_{{T_2}}^{\cal O}\left( {{\gamma _{th}}} \right)
\end{equation}
which is called the protocol outage probability, to evaluate the system outage performance approximately, because this metric requires only the outage probability at single terminal. The tightness of the approximation is verified in Fig. 2, where the terminals and relay are placed in a straight line and the relay is set between $T_1$ and $T_2$. The normalized distance between $T_1$ and $T_2$ is set to one. The variance of $c_{N,k}$ ($N\in\{T_1,T_2,R\}$) is assumed to be evenly distributed on the interval $[0.1,1]$. It is shown that the protocol outage probability provides a good approximation to system outage probability especially in the moderate and high SNR regions ($P>8$dB). Although a gap can be observed in the low SNR region (see Fig. 2(b)), the result based on (18) is still much tighter than the results based on the asymptotic method in \cite{IEEEhowto:Ikki2013}. As a result, it is reasonable to employ the protocol outage probability in either performance analysis or practical implementation.

\begin{figure}[t]
  \centering
  \subfigure[log-log plot]{
    \label{fig:subfig:a} 
    \includegraphics[width=2.8in]{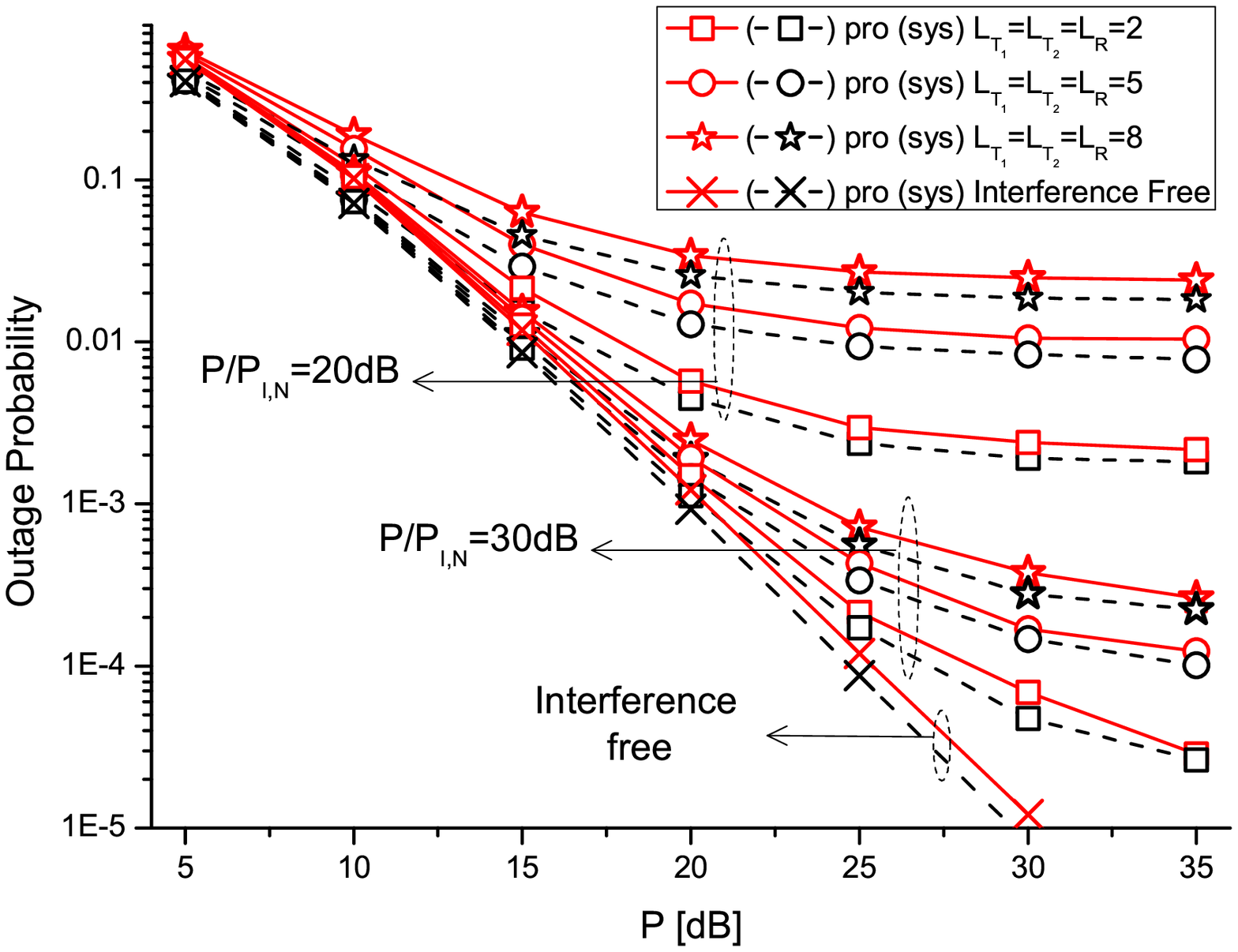}}
  \hspace{.1in}
  \subfigure[log-linear plot]{
    \label{fig:subfig:b} 
    \includegraphics[width=2.8in]{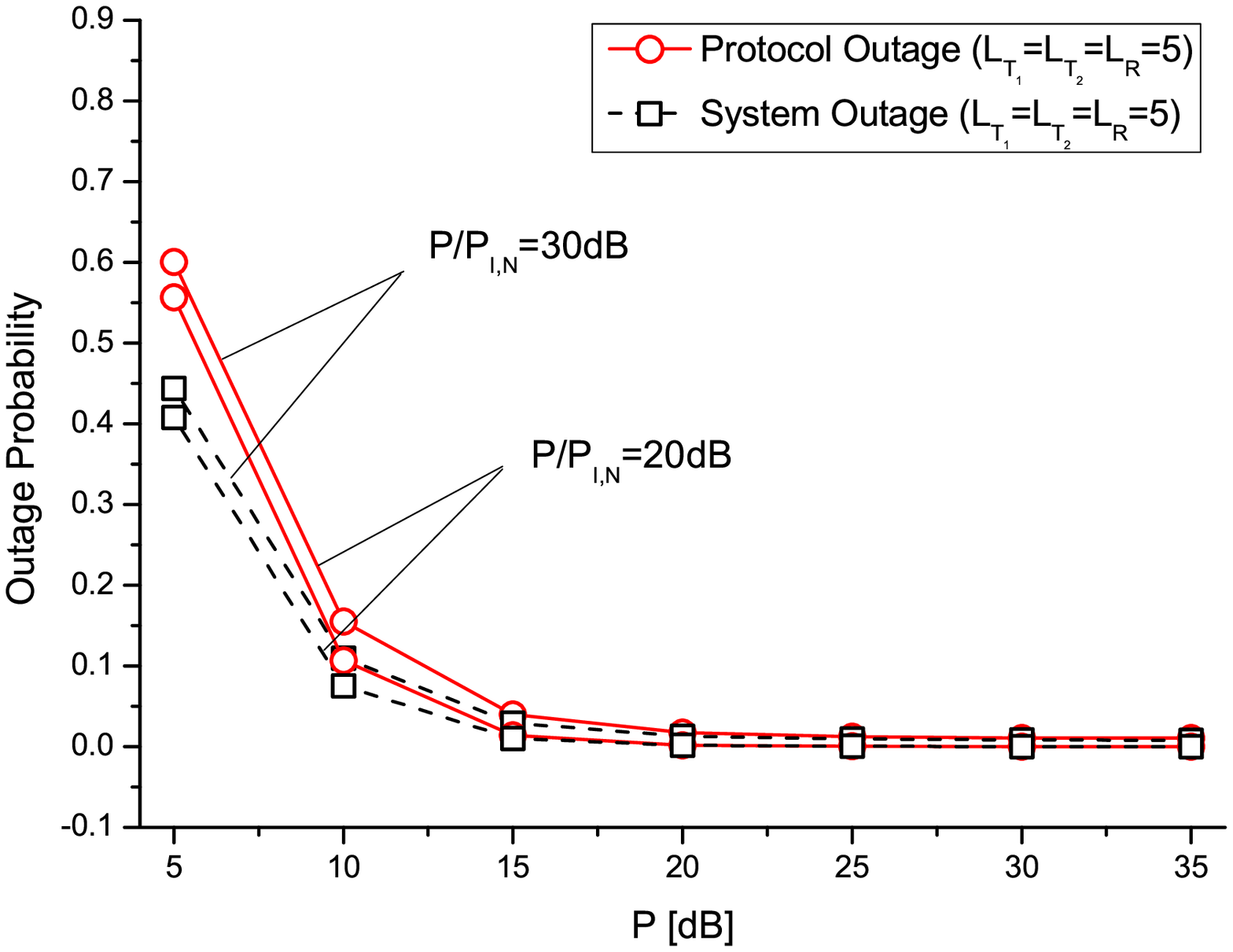}}
  \caption{Comparison between protocol outage probability and system outage probability, $\omega_1=\omega_2 = 0.5$, $d_{T_1,R}=d_{T_2,R}=0.5$, $\gamma_{th}=7$ (Corresponding to 1bit/s/Hz target rate). All the results are obtained by Monte-Carlo simulations.}
  \label{fig:subfig} 
\end{figure}



\newtheorem{theorem}{\emph{Theorem}}
\begin{theorem}\label{t1}
\emph{The lower bound and approximate expression of protocol outage probability, $P_{\rm{pro}}^{{\cal O},\rm{LB}}\left( {{\gamma _{th}}} \right)$ and $P_{\rm{pro}}^{{\cal O},\rm{App}}\left( {{\gamma _{th}}} \right)$, can be expressed as}
\begin{equation}
P_{\rm{pro}}^{{\cal O},\rm{LB}}\left( {{\gamma _{th}}} \right) = {F_{\Upsilon _{{T_1}}^{{\rm{UB}}}}}\left( {{\gamma _{th}}} \right) + {F_{\Upsilon _{{T_2}}^{{\rm{UB}}}}}\left( {{\gamma _{th}}} \right) - {F_{\Upsilon _{{T_1}}^{{\rm{UB}}}}}\left( {{\gamma _{th}}} \right){F_{\Upsilon _{{T_2}}^{{\rm{UB}}}}}\left( {{\gamma _{th}}} \right)
\end{equation}
\begin{equation}
P_{\rm{pro}}^{{\cal O},\rm{App}}\left( {{\gamma _{th}}} \right) = F_{\Upsilon _{{T_1}}^{{\rm{UB}}}}^{{\rm{App}}}\left( {{\gamma _{th}}} \right) + F_{\Upsilon _{{T_2}}^{{\rm{UB}}}}^{{\rm{App}}}\left( {{\gamma _{th}}} \right) - F_{\Upsilon _{{T_1}}^{{\rm{UB}}}}^{{\rm{App}}}\left( {{\gamma _{th}}} \right)F_{\Upsilon _{{T_2}}^{{\rm{UB}}}}^{{\rm{App}}}\left( {{\gamma _{th}}} \right)\
\end{equation}
\emph{where ${F_{\Upsilon _{{T_2}}^{\rm{UB}}}^{\rm{LB}}}\left( {{\gamma _{th}}} \right)$ and ${F_{\Upsilon _{{T_2}}^{\rm{UB}}}^{\rm{App}}}\left( {{\gamma _{th}}} \right)$ denote the lower bound and approximate expression of CDF for ${\Upsilon _{{T_2}}^{\rm{UB}}}$, respectively.}
\end{theorem}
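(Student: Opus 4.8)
The plan is to start from the definition of the protocol outage probability in (18) and reduce the problem to expressing each per-terminal outage $P_{T_i}^{\cal O}(\gamma_{th}) = F_{\Upsilon_{T_i}}(\gamma_{th})$ in closed form. Since no tractable closed form is available for the exact SINR $\Upsilon_{T_i}$ of (5), I would replace it by the upper bound $\Upsilon_{T_i}^{\rm{UB}}$ of (7), whose CDF is given explicitly by Lemma 1, and then argue that this substitution yields a genuine lower bound on $P_{\rm{pro}}^{\cal O}$.

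First I would establish the pointwise ordering $\Upsilon_{T_i} \le \Upsilon_{T_i}^{\rm{UB}}$. This follows directly from the harmonic-to-min inequality $\frac{xy}{x+y} \le \min\{x,y\}$ for $x,y>0$, applied to the two branches $\Upsilon_{T_i,1}, \Upsilon_{T_i,2}$ in (5) and (7), where the common direct term $\Upsilon_{T_i,D}$ cancels. Because the ordering holds for every channel realization, taking CDFs reverses it: $F_{\Upsilon_{T_i}^{\rm{UB}}}(\gamma) = \Pr(\Upsilon_{T_i}^{\rm{UB}} < \gamma) \le \Pr(\Upsilon_{T_i} < \gamma) = F_{\Upsilon_{T_i}}(\gamma)$ for every $\gamma$. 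Hence $F_{\Upsilon_{T_i}^{\rm{UB}}}(\gamma_{th})$ is a lower bound on the exact per-terminal outage, consistent with the remark in Section III.

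The key step, and the only point requiring care, is to confirm that substituting these lower bounds into (18) preserves the inequality. Writing $g(a,b) = a + b - ab = 1 - (1-a)(1-b)$, I would note that $\partial g/\partial a = 1-b \ge 0$ and $\partial g/\partial b = 1-a \ge 0$ on $[0,1]^2$, so $g$ is non-decreasing in each argument. Setting $a = F_{\Upsilon_{T_1}}(\gamma_{th})$ and $b = F_{\Upsilon_{T_2}}(\gamma_{th})$, and replacing them by the smaller quantities $F_{\Upsilon_{T_1}^{\rm{UB}}}(\gamma_{th})$ and $F_{\Upsilon_{T_2}^{\rm{UB}}}(\gamma_{th})$, can therefore only decrease $g$; this gives (19) together with the inequality $P_{\rm{pro}}^{{\cal O},\rm{LB}}(\gamma_{th}) \le P_{\rm{pro}}^{\cal O}(\gamma_{th})$. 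Inserting the closed form of Lemma 1 and its $T_2$ counterpart makes the right-hand side fully explicit.

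Finally, (20) is obtained by the identical substitution, with the approximate CDF $F^{\rm{App}}_{\Upsilon_{T_i}^{\rm{UB}}}$ of Lemma 2 in place of the exact $F_{\Upsilon_{T_i}^{\rm{UB}}}$. I would stress that (20) is no longer a certified bound, since $F^{\rm{App}}$ is itself an approximation rather than an exact CDF, but it is reported as tight in the low-SNR region. I do not anticipate any real obstacle in the calculation: the entire argument is a monotone substitution, and the only substantive content is the direction of the two elementary inequalities, namely the SINR ordering and the monotonicity of $g$ on $[0,1]^2$.
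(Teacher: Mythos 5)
Your proposal is correct and follows essentially the same route as the paper, which simply declares the result "straightforward according to (18)" and omits the details; you supply exactly the two monotonicity checks (the harmonic-to-min ordering of the SINRs and the coordinatewise monotonicity of $a+b-ab$ on $[0,1]^2$) that justify calling (19) a genuine lower bound. Your caveat that (20) is an approximation rather than a certified bound is also consistent with the paper's treatment.
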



\begin{proof}
The proof is straightforward according to (18), and thus it is neglect.
\end{proof}

Note that the approximate expression of protocol outage probability does not require the computation of infinite series according to Lemma 2.
\subsection{Asymptotic Analysis}

To get more insight about the effect of system parameters on the protocol outage probability, we provide asymptotic analysis based on the result in Lemma 3.

\begin{theorem}\label{t1}
\emph{The asymptotic expression of protocol outage probability $P_{\rm{pro}}^{{\cal O},\rm{Asy}}\left( {{\gamma _{th}}} \right)$ can be expressed as
\begin{equation}
\begin{aligned}
&   P_{\rm{pro}}^{{\cal O},\rm{Asy}}\left( {{\gamma _{th}}} \right)  = {F_{\Upsilon _{{T_1}}^{\rm{UB}}}^{\rm{Asy}}}\left( {{\gamma _{th}}} \right) + {F_{\Upsilon _{{T_2}}^{\rm{UB}}}^{\rm{Asy}}}\left( {{\gamma _{th}}} \right)\\
&{} = \sum\limits_{i = 1}^2 {\frac{{{\gamma ^2}}}{{2{{\bar \gamma }_0}}}\left( {\frac{{{\omega _i}\left( {{{\Gamma ''}_{{T_i}}} + 2{{\Gamma '}_{{T_i}}} + 1} \right) + \left( {{{\Gamma '}_R} + 1} \right)\left( {{{\Gamma '}_{{T_i}}} + 1} \right)}}{{{\omega _j}{{\bar \gamma }_j}}} + \frac{{{{\Gamma ''}_{{T_i}}} + 2{{\Gamma '}_{{T_i}}} + 1}}{{{{\bar \gamma }_i}}}} \right)}
\end{aligned}
\end{equation}
where $j\in{1,2}$ and $j\ne i$. ${{\Gamma '}_N} = \mathbb E\left( {{\Gamma _N}} \right)$ denotes the average received interference power at node $N$ and ${{\Gamma ''}_N} =  \mathbb E\left( {\Gamma _N^2} \right)$.}
\end{theorem}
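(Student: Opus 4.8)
The plan is to feed the per-terminal asymptotic CDF of Lemma~3 into the inclusion--exclusion structure of the protocol outage probability and then collapse the coefficient sums into interference moments. I would start from the lower-bound form $P_{\rm{pro}}^{{\cal O},\rm{LB}}(\gamma_{th}) = {F_{\Upsilon _{{T_1}}^{{\rm{UB}}}}}(\gamma_{th}) + {F_{\Upsilon _{{T_2}}^{{\rm{UB}}}}}(\gamma_{th}) - {F_{\Upsilon _{{T_1}}^{{\rm{UB}}}}}(\gamma_{th}){F_{\Upsilon _{{T_2}}^{{\rm{UB}}}}}(\gamma_{th})$ given in (19). Lemma~3 shows that, as the transmit power $P$ grows, each $F_{\Upsilon _{{T_i}}^{{\rm{UB}}}}(\gamma_{th})$ vanishes like $1/P^2$ (the prefactor $\gamma_{th}^2/(2\bar\gamma_0)$ is $O(1/P)$ and the bracket it multiplies is $O(1/\bar\gamma_j)=O(1/P)$). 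Hence the product ${F_{\Upsilon _{{T_1}}^{{\rm{UB}}}}}{F_{\Upsilon _{{T_2}}^{{\rm{UB}}}}}$ is $O(1/P^4)$ and is dominated by the two linear terms; discarding it yields the first equality $P_{\rm{pro}}^{{\cal O},\rm{Asy}}(\gamma_{th}) = F_{\Upsilon _{{T_1}}^{{\rm{UB}}}}^{{\rm{Asy}}}(\gamma_{th}) + F_{\Upsilon _{{T_2}}^{{\rm{UB}}}}^{{\rm{Asy}}}(\gamma_{th})$.

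For the second equality I would rewrite the sums over the interference-PDF coefficients appearing in Lemma~3 as moments of $\Gamma_N$. From the PDF (10) and $\int_0^\infty t^n\exp(-t/\xi)\,dt = n!\,\xi^{n+1}$, I would record the three identities $\sum_k \phi_{N,k}\xi_{N,k}=1$ (normalization), $\sum_k \phi_{N,k}\xi_{N,k}^2 = {\Gamma'}_N$, and $\sum_k \phi_{N,k}\xi_{N,k}^3 = {\Gamma''}_N/2$. Applying these to the single sum over $j$ in (16), the bracket $\sum_j \phi_{T_1,j}(\lambda_2\xi_{T_1,j} + (\lambda_1+\lambda_2)\xi_{T_1,j}^2 + 2\lambda_1\xi_{T_1,j}^3)$ collapses to $\lambda_2 + (\lambda_1+\lambda_2){\Gamma'}_{T_1} + \lambda_1{\Gamma''}_{T_1}$, while the double sum over $j$ and $k$ separates into a product of a $T_1$-sum and an $R$-sum and reduces to $\frac{1}{\omega_2\bar\gamma_2}{\Gamma'}_R(1+{\Gamma'}_{T_1})$.

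What remains is purely algebraic: insert $\lambda_i = \frac{1}{\bar\gamma_1} + \frac{\omega_1+i-1}{\omega_2\bar\gamma_2}$ and collect the monomials by their $1/\bar\gamma_1$ and $1/(\omega_2\bar\gamma_2)$ prefactors. The $1/\bar\gamma_1$ pieces assemble into $\frac{{\Gamma''}_{T_1} + 2{\Gamma'}_{T_1} + 1}{\bar\gamma_1}$, and the $1/(\omega_2\bar\gamma_2)$ pieces from the single sum combine with the factored double sum (which supplies the constant $1$ attached to ${\Gamma'}_R$ and the cross term ${\Gamma'}_R{\Gamma'}_{T_1}$) into $\frac{\omega_1({\Gamma''}_{T_1}+2{\Gamma'}_{T_1}+1) + ({\Gamma'}_R+1)({\Gamma'}_{T_1}+1)}{\omega_2\bar\gamma_2}$; repeating the argument for $T_2$ with the roles of $1$ and $2$ exchanged produces the stated sum over $i$. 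The main obstacle I anticipate is this final bookkeeping: verifying that the constant and first-moment contributions coming from $\lambda_1,\lambda_2$ merge with the ${\Gamma'}_R$ and ${\Gamma'}_R{\Gamma'}_{T_1}$ terms of the factored double sum so that the scattered monomials fuse exactly into the compact factor $({\Gamma'}_R+1)({\Gamma'}_{T_1}+1)$, leaving no residual terms and no spurious dependence on the individual $\xi_{N,k}$.
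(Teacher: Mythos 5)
Your proposal is correct and follows essentially the same route as the paper's proof: drop the product term ${F_{\Upsilon _{{T_1}}^{{\rm{UB}}}}}{F_{\Upsilon _{{T_2}}^{{\rm{UB}}}}}$ as higher order, then collapse the coefficient sums in Lemma~3 via the moment identities for $f_{\Gamma_N}$. In fact your third identity $\sum_k \phi_{N,k}\xi_{N,k}^3 = {\Gamma''}_N/2$ (from $\int_0^\infty x^2 e^{-x/\xi}dx = 2\xi^3$) is the one actually needed to turn the $2\lambda_1\xi_{T_1,j}^3$ term of (16) into $\lambda_1{\Gamma''}_{T_1}$ and reproduce (21); the paper's proof states this identity without the factor $1/2$, which is a typo you have implicitly corrected.
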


\begin{proof}
We first note that $F_{\Upsilon _{{T_1}}^{{\rm{UB}}}}^{{\rm{Asy}}}\left( \gamma  \right)\times F_{\Upsilon _{{T_2}}^{{\rm{UB}}}}^{{\rm{Asy}}}\left( \gamma  \right)$ is the infinitesimal of higher order of $F_{\Upsilon _{{T_i}}^{{\rm{UB}}}}^{{\rm{Asy}}}\left( \gamma  \right)$ when $\gamma \to 0$ according to Lemma 3, as a result, the asymptotic expression can be written as in the first line of (21). Furthermore, using the relations $\sum\nolimits_k {{\phi _{N,k}}{\xi _{N,k}}}  = \int_0^\infty  {{f_{{\Gamma _N}}}\left( x \right)dx = 1}$, $\sum\nolimits_k {{\phi _{N,k}}\xi _{N,k}^2}  = \int_0^\infty  {x{f_{{\Gamma _N}}}\left( x \right)dx}  = {{\Gamma '}_N}$ and $\sum\nolimits_k {{\phi _{N,k}}\xi _{N,k}^3}  = \int_0^\infty  {{x^2}{f_{{\Gamma _N}}}\left( x \right)dx}  = {{\Gamma ''}_N}$, one can obtain the second line of (21).
\end{proof}

According to Theorem 2 and the second line of (17), we can see that the protocol outage probability serves as an upper bound of the system outage probability when $\gamma_{th} \to 0$ (or equivalently, $P \to \infty$ \cite{IEEEhowto:Wang}). Moreover, from Theorem 2, it is clear that the protocol outage probability increases as the average received interference powers at $T_1$, $T_2$ and $R$ increasing and decreases as the useful power increasing. Moreover, since we have ${\Gamma'' _N}  \propto {P^2_{I,N}}$, ${\Gamma' _N}  \propto {P_{I,N}}$ and ${\bar \gamma _i} \propto P$ ($i\in\{0,1,2\}$), it is easy to show that the protocol outage probability is proportional to a constant when the ratio of interference power ${P_{I,N}}$ to useful power $P$ is fixed. This indicates that the achievable diversity (defined as $d = -\mathop {\lim }\limits_{P \to \infty } [ {{\log \left( {P_{{\rm{sys}}}^{\cal O}\left( {{\gamma _{th}}} \right)} \right)} \mathord{\left/
 {\vphantom {{\log \left( {P_{{\rm{pro}}}^{\cal O}\left( {{\gamma _{th}}} \right)} \right)} {\log P}}} \right.
 \kern-\nulldelimiterspace} {\log P}} ]$ \cite{IEEEhowto:Suraweera2011ICC},\cite{IEEEhowto:Salhab2013CL}) of 3P-TWR protocol is zero in the interference-limited scenario.

\section{Sum Bit Error Rate Analysis}
In this section, we consider the sum BER performance which is defined as the sum of two terminals' average BERs \cite{IEEEhowto:Louie2010}.


\subsection{Lower Bound and Approximate Analysis}
We first derive the lower bound of sum BER based on Lemma 1. According to \cite{IEEEhowto:Chen2004}, the sum BER for several types of modulations employed in practical systems can be expressed as a function of CDFs for the received SINRs at two terminals. As a result, the lower bound of sum BER $P_{\rm{sum}}^{{\cal E},\rm{LB}}$ can be expressed as
\begin{equation}
P_{\rm{sum}}^{{\cal E},\rm{LB}} = P_{{T_1}}^{{\cal E},\rm{LB}} + P_{{T_2}}^{{\cal E},\rm{LB}} = a\sqrt {\frac{b}{\pi }} \int\limits_0^\infty  {{F_{\Upsilon _{{T_1}}^{\rm{UB}}}}\left( \gamma  \right)\frac{{\exp \left( { - b\gamma } \right)}}{{\sqrt \gamma  }}} d\gamma + a\sqrt {\frac{b}{\pi }} \int\limits_0^\infty  {{F_{\Upsilon _{{T_2}}^{\rm{UB}}}}\left( \gamma  \right)\frac{{\exp \left( { - b\gamma } \right)}}{{\sqrt \gamma  }}} d\gamma
\end{equation}
where $P_{{T_i}}^{{\cal E},\rm{LB}}$ indicates the lower bound of average BER at terminal $T_i$. $a$ and $b$ are modulation-related constants. For example, we have $(a,b)=(0.5,1)$ for BPSK modulation and $(a,b)=(0.5,0.5)$ for QPSK modulation. Due to the symmetry, we provide only the derivation of $P_{{T_1}}^{{\cal E},\rm{LB}}$.

\begin{theorem}\label{t1}
\emph{The lower bound of average BER at $T_1$ (denoted by $P_{{T_1}}^{{\cal E},\rm{LB}}$) can be expressed in closed-form as
\begin{equation}
\begin{array}{ll}
P_{{T_1}}^{{\cal E},\rm{LB}} = a \!-\! {{\bar \gamma }_0}\sum\limits_{j } {{\phi _{T_1,j}}} \frac{{a\sqrt b }}{{\sqrt {{{{{\bar \gamma }_0}} \mathord{\left/
 {\vphantom {{{{\bar \gamma }_0}} {{\xi_{{T_1},j}}}}} \right.
 \kern-\nulldelimiterspace} {{\xi_{{T_1},j}}}}} }}{\mathbb H}\left( {\frac{1}{2},\frac{1}{2},\frac{{1 + {{\bar \gamma }_0}b}}{{{\xi_{{T_1},j}}}}} \right) \\
- \frac{{{\omega _2}}}{{{{\bar \gamma }_0}}}\sum\limits_j {\sum\limits_k {{\phi _{T_1,j}}{\phi _{R,k}}} } \left( {{\rm M}^{\cal E}}\left( {1,1, - \frac{1}{{{{\bar \gamma }_2}}},\frac{1}{{{{\bar \gamma }_2}}},\frac{{{\omega _2}}}{{{\xi_{R,k}}}}} \right) + {{\rm M}^{\cal E}}\left( {1,2, - \frac{1}{{{{\bar \gamma }_2}}},\frac{1}{{{{\bar \gamma }_2}}},\frac{{{\omega _2}}}{{{\xi_{R,k}}}}} \right) \right.\\
+ {{\rm M}^{\cal E}}\left( {\frac{1}{{{\omega _2}}},1,{\Phi _1},{\lambda _1},\frac{1}{{{\xi_{{T_1},j}}}}} \right)+\! {{\rm M}^{\cal E}}\!\left( {{\Phi _1}{{\bar \gamma }_2},2,{\Phi _1},{\lambda _1},\frac{1}{{{\xi_{{T_1},j}}}}} \right)\left. { \!+\! {\Lambda ^{\cal E}}\!\left( {{\lambda _1},{\lambda _2}} \right) \!-\! {\Lambda ^{\cal E}}\!\left( {\frac{1}{{{{\bar \gamma }_0}}},\frac{1}{{{{\bar \gamma }_0}}}} \right)} \right)
\end{array}
\end{equation}
The expressions of ${\rm M}^{{\cal E}}(\rho_1,\rho_2,\rho_3,\rho_4,\rho_5)$ and ${\Lambda ^{\cal E}}(\rho_1,\rho_2)$ are given in Table I, where $G(\cdot)$ and ${\mathbb H}(\cdot,\cdot,\cdot)$ indicate the gamma function and confluent hypergeometric function of the second kind \cite{IEEEhowto:20}, respectively. ${\mu _i}$ and ${\nu _i}$ in ${\rm M}^{{\cal E}}(\rho_1,\rho_2,\rho_3,\rho_4,\rho_5)$ are expressed as
\begin{equation}
\begin{aligned}
& {\mu _i} = \frac{1}{{\left( {l - i + 1} \right)!\rho _4^{l - i + 1}}}\left.  {{{d^{l - i + 1}}{{\left( {\frac{1}{{{{\bar \gamma }_0}}}\gamma  + {\beta _{j,k}}} \right)}^{ - {\rho _2}}}} \mathord{\left/
 {\vphantom {{{d^{l - i + 1}}{{\left( {\frac{1}{{{{\bar \gamma }_0}}}\gamma  + {\beta _{j,k}}} \right)}^{ - {\rho _2}}}} {d{\gamma ^{l - i + 1}}}}} \right.
 \kern-\nulldelimiterspace} {d{\gamma ^{l - i + 1}}}} \right|_{\gamma  =  - {{{\rho _5}} \mathord{\left/
 {\vphantom {{{\rho _5}} {{\rho _4}}}} \right.
 \kern-\nulldelimiterspace} {{\rho _4}}}} \\
&{} {\nu _i} = \frac{1}{{\left( {{\rho _2} - i} \right)!{{\bar \gamma }_0}^{i - {\rho _2}}}}{\left. {{{{d^{{\rho _2} - i}}\frac{1}{{{{\left( {{\rho _4}\gamma  + {\rho _5}} \right)}^{l + 1}}}}} \mathord{\left/
 {\vphantom {{{d^{{\rho _2} - i}}\frac{1}{{{{\left( {{\rho _4}\gamma  + {\rho _5}} \right)}^{l + 1}}}}} {d{\gamma ^{{\rho _2} - i}}}}} \right.
 \kern-\nulldelimiterspace} {d{\gamma ^{{\rho _2} - i}}}}} \right|_{\gamma  =  - {{\bar \gamma }_0}{\beta _{j,k}}}}
\end{aligned}
\end{equation}}
\end{theorem}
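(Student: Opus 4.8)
The plan is to substitute the closed-form CDF of Lemma~1 into the integral representation (22) of $P_{T_1}^{\mathcal E,\mathrm{LB}}$ and evaluate the resulting integrals term by term. Since the map $F\mapsto a\sqrt{b/\pi}\int_0^\infty F(\gamma)\gamma^{-1/2}\exp(-b\gamma)\,d\gamma$ is linear, it distributes over the three groups of terms in (11): the constant $1$, the single sum $\sum_j\phi_{T_1,j}\exp(-\gamma/\bar\gamma_0)\bar\gamma_0/(\gamma+\bar\gamma_0/\xi_{T_1,j})$, and the double sum $\sum_j\sum_k\phi_{T_1,j}\phi_{R,k}(\cdots)$ assembled from the functions $\mathrm M$ and $\Lambda$ of (12)--(13). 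Accordingly I define $\mathrm M^{\mathcal E}$ and $\Lambda^{\mathcal E}$ to be the images of $\mathrm M$ and $\Lambda$ under this transformation, so that (23) is produced simply by transforming (11) line by line; in particular the constant $1$ maps to $a\sqrt{b/\pi}\int_0^\infty\gamma^{-1/2}\exp(-b\gamma)\,d\gamma=a$, giving the leading $a$ in (23).

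The workhorse for every remaining integral is the identity
\begin{equation*}
\int_0^\infty \frac{\gamma^{s-1}\exp(-p\gamma)}{(\gamma+c)^{n}}\,d\gamma = c^{\,s-n}\,G(s)\,\mathbb H\!\left(s,\,s-n+1,\,pc\right),
\end{equation*}
obtained from the integral representation of the confluent hypergeometric function of the second kind by the substitution $\gamma=ct$. Applying it with $s=\tfrac12$, $n=1$, $c=\bar\gamma_0/\xi_{T_1,j}$ and $p=1/\bar\gamma_0+b$ to the single-sum term, so that $pc=(1+\bar\gamma_0 b)/\xi_{T_1,j}$, yields exactly the $\mathbb H(\tfrac12,\tfrac12,(1+\bar\gamma_0 b)/\xi_{T_1,j})$ contribution in (23) once the prefactors $\bar\gamma_0$ and $a\sqrt{b/\pi}$ are collected.

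For $\Lambda$ the two factors $(\gamma/\bar\gamma_0+\beta_{k,j})^{-1}$ and $(\rho_1\gamma+1/\xi_{T_1,j})^{-1}$ are expanded into partial fractions with two simple poles; each resulting piece $(\gamma+c)^{-1}\gamma^{-1/2}\exp(-(b+\rho_2)\gamma)$ is handled by the identity above, so $\Lambda^{\mathcal E}$ is a finite combination of $\mathbb H(\tfrac12,\tfrac12,\cdot)$ values with no residual series. The function $\mathrm M$ is the genuine difficulty, as it carries $(\gamma/\bar\gamma_0+\beta_{j,k})^{-\rho_2}$, a pole of order $l+1$ through $(\rho_4\gamma+\rho_5)^{-(l+1)}$, and the lower incomplete gamma $\mathbb L(l+1,\Phi_2\gamma)$, all under $\sum_{l=0}^\infty$. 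I would first use $\mathbb L(l+1,\Phi_2\gamma)=l!\bigl(1-\exp(-\Phi_2\gamma)\sum_{m=0}^{l}(\Phi_2\gamma)^m/m!\bigr)$ to split each summand into a ``complete'' piece weighted by $\exp(-(b+\lambda_2)\gamma)$ and polynomial pieces weighted by $\gamma^{m}\exp(-(b+\lambda_2+\Phi_2)\gamma)$. In both the rational factor $(\gamma/\bar\gamma_0+\beta_{j,k})^{-\rho_2}(\rho_4\gamma+\rho_5)^{-(l+1)}$ is decomposed into partial fractions about its poles $\gamma=-\bar\gamma_0\beta_{j,k}$ (order $\rho_2$) and $\gamma=-\rho_5/\rho_4$ (order $l+1$); the residue coefficients are precisely the quantities $\nu_i$ and $\mu_i$ defined in (24), since those are exactly the normalized Taylor derivatives of the complementary factor at each pole. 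Termwise use of the master identity, now with $s=\tfrac12$ and $s=m+\tfrac12$ and pole orders $i$, converts every term into a $G(\cdot)\,\mathbb H(\cdot,\cdot,\cdot)$ value, which is the entry recorded for $\mathrm M^{\mathcal E}$ in Table~I (still carrying the $l$-series, consistent with the series form of Lemma~1).

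The main obstacle is the rigorous treatment of the $l$-series inside $\mathrm M$: one must justify interchanging $\sum_{l=0}^\infty$ with $\int_0^\infty$. I would control this with the elementary bound $\mathbb L(l+1,\Phi_2\gamma)=\int_0^{\Phi_2\gamma}t^{l}e^{-t}\,dt\le (\Phi_2\gamma)^{l+1}/(l+1)$, which majorizes the $l$-th summand of the integrand by $(l+1)^{-1}\lvert\rho_3\gamma/(\rho_4\gamma+\rho_5)\rvert^{l+1}$ times the integrable factor $(\rho_1/\rho_3)\gamma^{-1/2}\exp(-(b+\lambda_2)\gamma)(\gamma/\bar\gamma_0+\beta_{j,k})^{-\rho_2}$; because the ratio $\lvert\rho_3\gamma/(\rho_4\gamma+\rho_5)\rvert$ stays below unity over the range of $\gamma$ and the parameters of interest, this majorant is summable and dominated convergence applies, so the termwise-integrated series converges to $\mathrm M^{\mathcal E}$. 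The remaining bookkeeping---tracking $\rho_1,\dots,\rho_5$, the prefactor $a\sqrt{b/\pi}$, and the global factor $-\omega_2/\bar\gamma_0$ so that the four $\mathrm M^{\mathcal E}$ and two $\Lambda^{\mathcal E}$ terms of (23) reproduce the six corresponding terms of (11)---is routine once the master identity and the residue formulas (24) are in place.
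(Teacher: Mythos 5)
Your proposal follows essentially the same route as the paper's Appendix D: substitute the CDF of Lemma 1 into (22), identify $\mathrm M^{\mathcal E}$ and $\Lambda^{\mathcal E}$ as the $a\sqrt{b/\pi}\int_0^\infty(\cdot)\gamma^{-1/2}e^{-b\gamma}d\gamma$ images of $\mathrm M$ and $\Lambda$, perform the partial fractions whose coefficients are exactly the $\mu_i,\nu_i$ of (24), and evaluate each piece with the confluent-hypergeometric integral identity [28, 9.211.4]. The only departures are cosmetic or additive: you expand $\mathbb L(l+1,\cdot)$ via its finite sum for integer order rather than the paper's infinite series [28, 8.354.1], and you supply a dominated-convergence justification for the series--integral interchange that the paper omits (noting that your majorant argument, like the paper's underlying Taylor expansion in (51), implicitly requires $|\rho_3\gamma|<\rho_4\gamma+\rho_5$ over the integration range).
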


\begin{proof}
See Appendix D.
\end{proof}

Although the lower bound derived in the above provides high accuracy as will be shown in the simulations, its practical applications are limited by the double infinite series introduced by ${\rm M}^{\cal E}\left( {{\rho _1},{\rho _2},{\rho _3},{\rho _4},{\rho _5}} \right)$. To alleviate the complexity, we then derived the approximate expression for BER at $T_1$ which does not require the computation of infinite series.

\begin{table}[!t]
\caption{Some Useful Functions}
\label{table_example}
\centering
\begin{tabular}{lll}
\includegraphics[width=15.6cm]{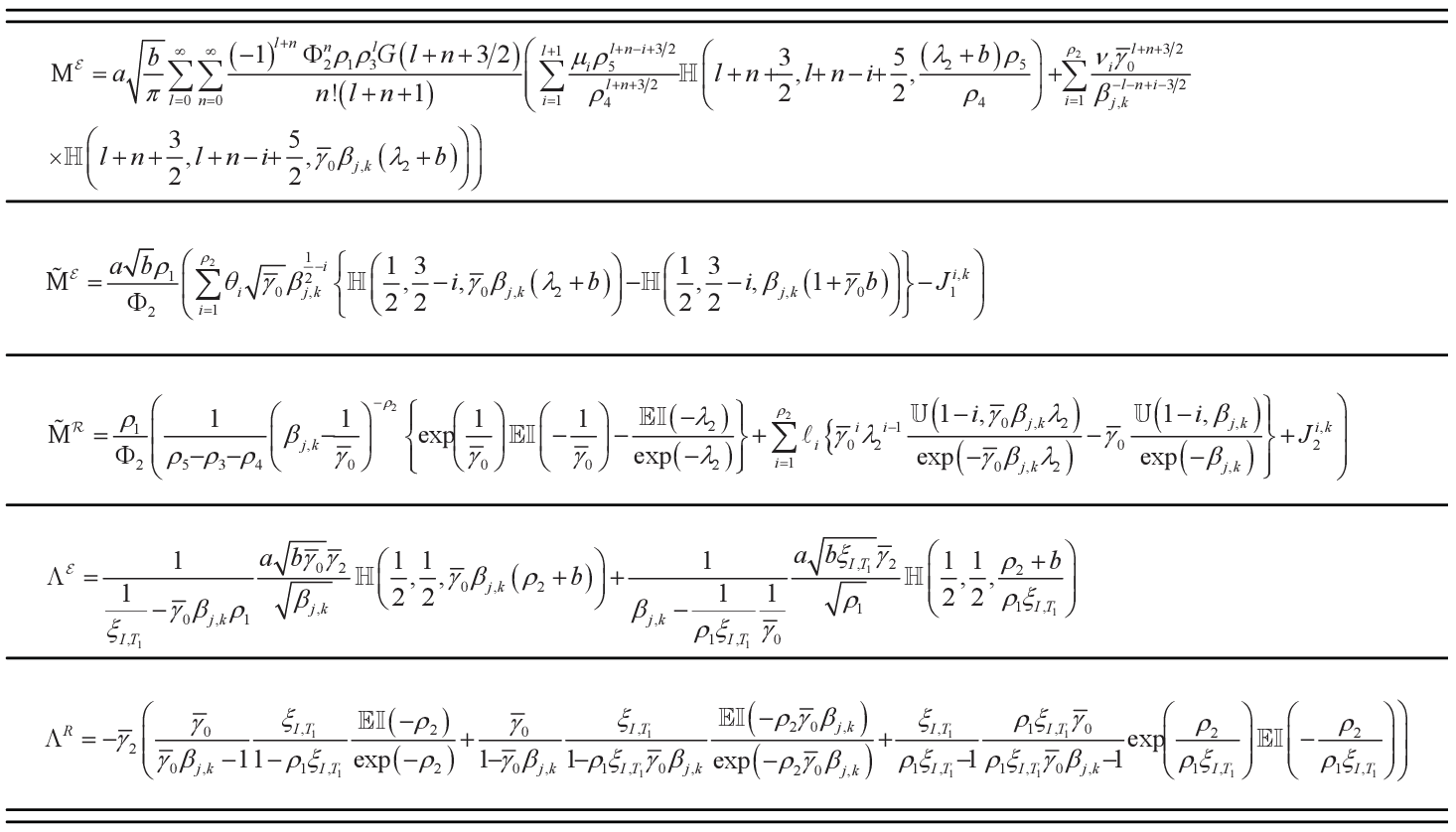}
\end{tabular}
\end{table}

\begin{theorem}\label{t1}
\emph{The approximate expression for average BER at $T_1$ (denoted by $P_{{T_1}}^{{\cal E},\rm{App}}$) can be obtained by replacing the function ${\rm{M}^{\cal E}}\left( {{\rho _1},{\rho _2},{\rho _3},{\rho _4},{\rho _5}} \right)$ in Theorem 3 with ${\rm{\tilde M}^{\cal E}}\left( {{\rho _1},{\rho _2},{\rho _3},{\rho _4},{\rho _5}} \right)$ which is given in Table I, where ${\theta_i}$ in ${\rm{\tilde M}^{\cal E}}\left( {{\rho _1},{\rho _2},{\rho _3},{\rho _4},{\rho _5}} \right)$ is given by
\begin{equation}
{\left. {{\theta _i} = \frac{1}{{\bar \gamma _0^{i - {\rho _2}}\left( {{\rho _2} - i} \right)!}}{{{d^{{\rho _2} - i}}\left[ {\frac{1}{{1 + \gamma }}\frac{1}{{\left( {{\rho _3} + {\rho _4}} \right)\gamma  + {\rho _5}}}} \right]} \mathord{\left/
 {\vphantom {{{d^{{\rho _2} - i}}\left[ {\frac{1}{{1 + \gamma }}\frac{1}{{\left( {{\rho _3} + {\rho _4}} \right)\gamma  + {\rho _5}}}} \right]} {d{\gamma ^{{\rho _2} - i}}}}} \right.
 \kern-\nulldelimiterspace} {d{\gamma ^{{\rho _2} - i}}}}} \right|_{\gamma  =  - {{\bar \gamma }_0}{\beta _{j,k}}}}
\end{equation}
$J_1^{j,k}$ in ${\rm{\tilde M}^{\cal E}}\left( {{\rho _1},{\rho _2},{\rho _3},{\rho _4},{\rho _5}} \right)$ is expressed as
\begin{equation}
J_1^{j,k} = \left\{ \begin{array}{l}
0,{\rho _3} + {\rho _4} = 0\\
{\left( {{\beta _{j,k}} - \frac{\rho }{{{{\bar \gamma }_0}}}} \right)^{ - {\rho _2}}}\frac{{\rho _5^{ - \frac{1}{2}}}}{{\sqrt {{\rho _3} + {\rho _4}} }}\left\{ {\mathbb H\left( {\frac{1}{2},\frac{1}{2},\rho \left( {{\lambda _2} + b} \right)} \right) - \mathbb H\left( {\frac{1}{2},\frac{1}{2},\left( {\frac{1}{{{{\bar \gamma }_0}}} + b} \right)\rho } \right)} \right\},{\rho _3} + {\rho _4} \ne 0
\end{array} \right.
\end{equation}
where $\rho  = \frac{{{\rho _5}}}{{{\rho _3} + {\rho _4}}}$.}
\end{theorem}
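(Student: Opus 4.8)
The plan is to mirror the proof of Theorem 3, replacing the exact CDF of Lemma 1 by its approximation from Lemma 2. I would define the approximate BER through the same representation used in (22),
\begin{equation}
P_{{T_1}}^{{\cal E},\rm{App}} = a\sqrt{\frac{b}{\pi}}\int_0^\infty {F_{\Upsilon _{{T_1}}^{\rm{UB}}}^{\rm{App}}}\left( \gamma \right)\frac{\exp\left( -b\gamma \right)}{\sqrt{\gamma}}d\gamma .
\end{equation}
By Lemma 2, ${F_{\Upsilon _{{T_1}}^{\rm{UB}}}^{\rm{App}}}$ is obtained from the exact CDF of Lemma 1 solely by replacing ${\rm{M}}$ with ${\rm{\tilde M}}$. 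Consequently, every additive piece of the integrand other than the ${\rm{M}}$ terms---the constant $1$, the $\phi_{T_1,j}$ term, and the two $\Lambda$ terms---is identical to the one already integrated in Theorem 3. Hence the constant reproduces $a$, the $\phi_{T_1,j}$ term reproduces the ${\mathbb H}\left(\tfrac12,\tfrac12,\cdot\right)$ contribution, and the $\Lambda$ terms reproduce $\Lambda^{\cal E}$ unchanged. The only genuinely new quantity is therefore the per-term integral
\begin{equation}
{\rm{\tilde M}}^{\cal E} = a\sqrt{\frac{b}{\pi}}\int_0^\infty {\rm{\tilde M}}\,\frac{\exp\left( -b\gamma \right)}{\sqrt{\gamma}}d\gamma ,
\end{equation}
which then slots into the same double sum $-\frac{\omega_2}{\bar\gamma_0}\sum_j\sum_k\phi_{T_1,j}\phi_{R,k}\left(\cdots\right)$ as in Theorem 3.

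To evaluate ${\rm{\tilde M}}^{\cal E}$ I would substitute (14) and split the factor $1-\exp(-\Phi_2\gamma)$ into two exponential terms. Using the identity $\lambda_2+\Phi_2=1/\bar\gamma_0$, which is immediate from the definitions of $\lambda_i$ and $\Phi_i$, these two terms acquire the effective decay rates $\lambda_2+b$ and $1/\bar\gamma_0+b$---exactly the two arguments appearing inside the braces of (27). What remains is the rational-times-exponential kernel $\left(\tfrac{1}{\bar\gamma_0}\gamma+\beta_{j,k}\right)^{-\rho_2}\left[(\rho_3+\rho_4)\gamma+\rho_5\right]^{-1}$, which I would resolve by partial fractions in $\gamma$.

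This kernel has two kinds of poles. The factor $\left(\tfrac{1}{\bar\gamma_0}\gamma+\beta_{j,k}\right)^{-\rho_2}$ gives a pole of order $\rho_2\in\{1,2\}$ at $\gamma=-\bar\gamma_0\beta_{j,k}$, whose partial-fraction coefficients are produced by $(\rho_2-i)$-fold differentiation of the regular part; this differentiation is what the $\theta_i$ of (26) encode. The factor $\left[(\rho_3+\rho_4)\gamma+\rho_5\right]^{-1}$ gives a simple pole at $\gamma=-\rho$ with $\rho=\rho_5/(\rho_3+\rho_4)$, whose residue evaluates $\left(\tfrac{1}{\bar\gamma_0}\gamma+\beta_{j,k}\right)^{-\rho_2}$ at $\gamma=-\rho$ and hence produces the prefactor $\left(\beta_{j,k}-\rho/\bar\gamma_0\right)^{-\rho_2}$ of (27). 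Each surviving elementary integral is of the form $\int_0^\infty \gamma^{-1/2}(\gamma+c)^{-m}\exp(-s\gamma)d\gamma$, which I would evaluate in closed form via the second-kind confluent hypergeometric integral of \cite{IEEEhowto:20}. The two decay rates then turn the simple-pole contribution into the difference of two ${\mathbb H}\left(\tfrac12,\tfrac12,\cdot\right)$ terms, i.e.\ $J_1^{j,k}$, while the order-$\rho_2$ pole yields the $\theta_i$-weighted ${\mathbb H}$ terms collected in Table I. The degenerate case $\rho_3+\rho_4=0$, in which the simple pole is absent, must be isolated and gives $J_1^{j,k}=0$, accounting for the first branch of (27).

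I expect the main obstacle to be the partial-fraction bookkeeping at the order-$\rho_2$ pole: carrying the $(\rho_2-i)$-fold derivatives through the hypergeometric evaluation so that the surviving coefficients coincide with the derivative forms prescribed for $\theta_i$ in (26), and confirming that the two exponential rates collapse to $\lambda_2+b$ and $1/\bar\gamma_0+b$ via $\lambda_2+\Phi_2=1/\bar\gamma_0$. The boundary case $\rho_3+\rho_4=0$ will also require a careful limiting argument to verify it contributes nothing beyond the $\theta_i$ terms.
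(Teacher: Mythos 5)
Your proposal is correct and follows essentially the same route as the paper's (much terser) proof: both reduce the problem to evaluating $\tilde{\rm M}^{\cal E}=a\sqrt{b/\pi}\int_0^\infty \gamma^{-1/2}e^{-b\gamma}\,\tilde{\rm M}\,d\gamma$ with $\tilde{\rm M}$ from (14), apply partial fractions to $\left(\frac{1}{\bar\gamma_0}\gamma+\beta_{j,k}\right)^{-\rho_2}\frac{\rho_1}{(\rho_3+\rho_4)\gamma+\rho_5}$, and invoke [28, 9.211.4] to produce the $\mathbb{H}\left(\frac12,\frac12,\cdot\right)$ terms. Your added observations (the identity $\lambda_2+\Phi_2=1/\bar\gamma_0$ explaining the two decay rates, and the isolation of the $\rho_3+\rho_4=0$ branch) are correct details that the paper leaves implicit.
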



\begin{proof}
According to (22) and Lemma 2, the approximate expression can be obtained by replacing ${{\rm{M}}^{\cal E}}\left( {{\rho _1},{\rho _2},{\rho _3},{\rho _4},{\rho _5}} \right)$ with integral ${{\rm{\tilde M}}^{\cal E}}\left( {{\rho _1},{\rho _2},{\rho _3},{\rho _4},{\rho _5}} \right) = a\sqrt {\frac{b}{\pi }} \int_0^\infty  {\frac{{\exp \left( { - b\gamma } \right)}}{{\sqrt \gamma  }}} {\rm{\tilde M}}\left( {{\rho _1},{\rho _2},{\rho _3},{\rho _4},{\rho _5}} \right)d\gamma $, where ${\rm{\tilde M}}\left( {{\rho _1},{\rho _2},{\rho _3},{\rho _4},{\rho _5}} \right)$ is given by (14). Performing {partial\footnotemark[6]} fraction on term ${\left( {\frac{1}{{{{\bar \gamma }_0}}}\gamma  + {\beta _{j,k}}} \right)^{ - {\rho _2}}}\frac{{{\rho _1}}}{{\left( {{\rho _3} + {\rho _4}} \right)\gamma  + {\rho _5}}}$ and using [28, 9.211.4] on the resultant expression, ${{\rm{\tilde M}}^{\cal E}}\left( {{\rho _1},{\rho _2},{\rho _3},{\rho _4},{\rho _5}} \right)$ can be expressed as the second line of Table I.
\end{proof}
{\footnotetext[6]{The partial fraction will be frequently used in the derivations of this paper. Therefore, for the partial fraction of $\frac{1}{{{{\left( {{a_1}x + {b_1}} \right)}^{{m_1}}} \cdots {{\left( {{a_n}x + {b_n}} \right)}^{{m_n}}}}}$, we only consider the case $\frac{b_i}{a_i}\ne\frac{b_j}{a_j}$, $\forall i,j\in\{1,\cdots,n\}$ and $i \ne j$. For the special case $\exists i \ne j$, $\frac{b_i}{a_i} = \frac{b_j}{a_j}$, the results can be obtained by using the similar method, thus is neglect for the sake of clarity.}}

Note that the expressions in Theorem 3 and Theorem 4 only involve the special function ${\mathbb H}(\cdot,\cdot,\cdot)$, which can be easily evaluated by softwares such as Mathematica and Matlab. Moreover, Theorem 3 and Theorem 4 present only the expressions of BER for single terminal for the sake of clarity. The expressions for sum BER can be simply derived by exploiting the symmetry between two terminals.

\subsection{Asymptotic Analysis}
Next, we present the asymptotic expression which allows us to fast estimate the sum BER performance in the high SNR region. Inserting (16) into (22) and employing the integral result reported in [28, 3.381.4], we can obtain the following theorem.

\begin{theorem}\label{t1}
\emph{The asymptotic expression of sum BER (denoted by $P_{{\rm{sum}}}^{{\cal E},{\rm{Asy}}}$) can be expressed as
\begin{equation}
P_{{\rm{sum}}}^{{\cal E},{\rm{Asy}}} = \frac{a}{{\sqrt \pi  {b^2}\gamma _{th}^2}}G\left( {\frac{5}{2}} \right)P_{{\rm{pro}}}^{{\cal O},{\rm{Asy}}}\left( {{\gamma _{th}}} \right)
\end{equation}}
\end{theorem}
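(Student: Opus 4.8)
The plan is to start from the lower-bound expression (22) for the sum BER and replace each CDF $F_{\Upsilon _{{T_i}}^{\rm{UB}}}\left( \gamma \right)$ by its small-$\gamma$ asymptotic form from Lemma 3. This substitution is justified because, as $P\to\infty$, the weighting kernel $\exp\left(-b\gamma\right)/\sqrt{\gamma}$ appearing in (22) concentrates its mass near $\gamma=0$, so the value of the integral is governed by the leading-order behaviour of the CDF. Since each CDF enters (22) linearly, I would handle the two terminals' contributions separately and add them at the end.

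First I would observe that (16) expresses $F_{\Upsilon _{{T_i}}^{\rm{UB}}}^{\rm{Asy}}\left( \gamma \right)$ as a pure monomial $K_i\gamma^2$, where the constant $K_i$ absorbs all the interference- and channel-dependent factors inside the large parentheses of (16). Substituting this monomial into (22) and using $\gamma^2/\sqrt{\gamma}=\gamma^{3/2}$, the contribution of terminal $T_i$ collapses to
\begin{equation}
a\sqrt{\frac{b}{\pi}}\,K_i\int_0^\infty \gamma^{3/2}\exp\left(-b\gamma\right)\,d\gamma .
\end{equation}

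Next I would evaluate this single integral with the standard Gamma integral [28, 3.381.4], namely $\int_0^\infty \gamma^{s-1}\exp\left(-b\gamma\right)\,d\gamma=G\left(s\right)/b^{s}$ taken at $s=5/2$, which gives $G\left(5/2\right)/b^{5/2}$. Collecting the prefactors via $a\sqrt{b/\pi}\cdot b^{-5/2}=a/\left(\sqrt{\pi}\,b^2\right)$, the total sum BER becomes $\bigl(aG\left(5/2\right)/\left(\sqrt{\pi}\,b^2\right)\bigr)\left(K_1+K_2\right)$. The final step is to identify $K_1+K_2$ through Theorem 2: since $P_{\rm{pro}}^{{\cal O},\rm{Asy}}\left({\gamma _{th}}\right)=F_{\Upsilon _{{T_1}}^{\rm{UB}}}^{\rm{Asy}}\left({\gamma _{th}}\right)+F_{\Upsilon _{{T_2}}^{\rm{UB}}}^{\rm{Asy}}\left({\gamma _{th}}\right)=\left(K_1+K_2\right)\gamma_{th}^2$, we have $K_1+K_2=P_{\rm{pro}}^{{\cal O},\rm{Asy}}\left({\gamma _{th}}\right)/\gamma_{th}^2$; inserting this yields (28) exactly.

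The computation is routine, so the only point deserving care is the legitimacy of substituting the asymptotic monomial in place of the exact (lower-bound) CDF inside the BER integral. The argument is that the $\gamma^2$ term of the CDF is the dominant contribution under the kernel in the high-SNR limit, while the omitted higher-order terms scale faster in $1/\bar\gamma_i$ and do not affect the leading asymptotic order; this is the same reasoning underlying the truncation in Lemma 3, so no new obstacle arises. The structural point worth emphasising is that the asymptotic sum BER and the asymptotic protocol outage probability share the identical $\gamma^2$ scaling, which is precisely why the former factorises into the latter times the purely modulation-dependent constant $aG\left(5/2\right)/\left(\sqrt{\pi}\,b^2\gamma_{th}^2\right)$.
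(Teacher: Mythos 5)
Your proposal is correct and follows the same route as the paper, which obtains Theorem 5 by inserting the asymptotic CDF of (16) into the BER integral (22) and applying the Gamma integral [28, 3.381.4]; your identification of $K_1+K_2$ with $P_{\rm{pro}}^{{\cal O},\rm{Asy}}\left({\gamma_{th}}\right)/\gamma_{th}^2$ via Theorem 2 is exactly how the final factorized form arises.
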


Note that the asymptotic sum BER is independent with the target SINR ${\gamma _{th}}$ because $P_{{\rm{pro}}}^{{\cal O},{\rm{Asy}}}\left( {{\gamma _{th}}} \right) \propto {\gamma ^2_{th}}$. Theorem 5 shows that the sum BER is a linear function of the protocol outage probability defined by (18) when the transmitted power goes into infinity. As a result, the asymptotic behavior for the sum BER is similar with that for protocol outage probability analyzed in section IV-B.

\section{Ergodic Sum Rate Analysis}
Another important metric to evaluate the system performance is ergodic sum rate which is defined as the sum of two terminals' average achievable rates. For 3P-TWR protocol, the ergodic sum rate can be expressed as
\begin{equation}
{R_{\rm{sum}}} \le {\mathbb E}\left[\frac{1}{3} {{{\log }_2}\left( {1 + \gamma _{{T_1}}^{\rm{UB}}} \right)} \right] + {\mathbb E}\left[ \frac{1}{3} {{{\log }_2}\left( {1 + \gamma _{{T_2}}^{\rm{UB}}} \right)} \right]
\end{equation}
where the pre-log factor of 1/3 is due to the fact that three phases are required for one round of data exchange between two terminals. Also, we consider only the average achievable rate for $T_1$ and derive the approximate expression based on Lemma 2 provided in section III. As shown in \cite{IEEEhowto:Al-Qahtani}, the average achievable rate at $T_1$ can also be expressed as a function of CDF for the received SINR at terminal, i.e.,
\begin{equation}
R_{{T_1}}^{{\rm{App}}} = \frac{1}{\tau }\int\limits_0^\infty  {\frac{{1 - F_{\Upsilon _{{T_1}}^{\rm{UB}}}^{{\rm{App}}}\left( \gamma  \right)}}{{1 + \gamma }}} d\gamma
\end{equation}
where $\tau = 3\ln(2)$.

\begin{theorem}\label{t1}
\emph{The approximate expression of average achievable rate at $T_1$ can be expressed in closed-form as
\begin{equation}
\begin{array}{ll}
R_{{T_1}}^{{\rm{App}}} = \frac{{{{\bar \gamma }_0}}}{\tau }\sum\limits_{j } {{\phi _{T_1,j}}{\chi _j}} \left( {\exp \left( {\frac{1}{{{{\bar \gamma }_0}}}} \right)\mathbb {EI}\left( { - \frac{1}{{{{\bar \gamma }_0}}}} \right) - \exp \left( {\frac{1}{{{\xi_{{T_1},j}}}}} \right)\mathbb {EI}\left( { - \frac{1}{{{\xi_{{T_1},j}}}}} \right)} \right)+ \frac{{{\omega _2}}}{{{{\bar \gamma }_0}\tau}}\sum\limits_j {\sum\limits_k {{\phi _{T_1,j}}{\phi _{R,k}}} }\\
\times \left\{ {{\tilde{\rm M}^{\cal R}}\left( {1,1, - \frac{1}{{{{\bar \gamma }_2}}},\frac{1}{{{{\bar \gamma }_2}}},\frac{{{\omega _2}}}{{{\xi_{R,k}}}}} \right) + {\tilde{\rm M}^{\cal R}}\left( {1,2, - \frac{1}{{{{\bar \gamma }_2}}},\frac{1}{{{{\bar \gamma }_2}}},\frac{{{\omega _2}}}{{{\xi_{R,k}}}}} \right)}+ {\tilde{\rm M}^{\cal R}}\left( {\frac{1}{{{\omega _2}}},1,{\Phi _1},{\lambda _1},\frac{1}{{{\xi_{{T_1},j}}}}} \right) \right.\\
+ {\tilde{\rm M}^{\cal R}}\left( {{\Phi _1}{{\bar \gamma }_2},2,{\Phi _1},{\lambda _1},\frac{1}{{{\xi_{{T_1},j}}}}} \right)\left. { + {\Lambda ^{\cal R}}\left( {{\lambda _1},{\lambda _2}} \right) - {\Lambda ^{\cal R}}\left( {\frac{1}{{{{\bar \gamma }_0}}},\frac{1}{{{{\bar \gamma }_0}}}} \right)} \right\}
\end{array}
\end{equation}
where ${\chi _j} = \frac{1}{{1 - {{{{\bar \gamma }_0}} \mathord{\left/ {\vphantom {{{{\bar \gamma }_0}} {{\xi_{{T_1},j}}}}} \right. \kern-\nulldelimiterspace} {{\xi_{{T_1},j}}}}}}$. The function ${\rm \tilde M}^{\cal R}\left( {{\rho _1},{\rho _2},{\rho _3},{\rho _4},{\rho _5}} \right)$ and ${\Lambda ^{\cal R}}\left( {{\rho _1},{\rho _2}} \right)$ are given by Table I. $\mathbb {EI}(\cdot)$ and $\mathbb U(\cdot,\cdot)$ denote the exponential integral function and upper incomplete gamma function \cite{IEEEhowto:20}, respectively. ${\ell _i}$ in ${\rm \tilde M}^{\cal R}\left( {{\rho _1},{\rho _2},{\rho _3},{\rho _4},{\rho _5}} \right)$ is given by
\begin{equation}
{\ell _i} = \frac{1}{{\bar \gamma _0^{i - {\rho _2}}\left( {{\rho _2} - i} \right)!}} \left. {{{d^{{\rho _2} - i}}\left( {\frac{1}{{1 + \gamma }}\frac{1}{{\left( {{\rho _3} + {\rho _4}} \right)\gamma  + {\rho _5}}}} \right)} \mathord{\left/
 {\vphantom {{{d^{{\rho _2} - i}}\left( {\frac{1}{{1 + \gamma }}\frac{1}{{\left( {{\rho _3} + {\rho _4}} \right)\gamma  + {\rho _5}}}} \right)} {d{\gamma ^{{\rho _2} - i}}}}} \right.
 \kern-\nulldelimiterspace} {d{\gamma ^{{\rho _2} - i}}}} \right|_{\gamma  =  - {{\bar \gamma }_0}{\beta _{j,k}}}
\end{equation}
$J_2^{j,k}$ in ${\rm \tilde M}^{\cal R}\left( {{\rho _1},{\rho _2},{\rho _3},{\rho _4},{\rho _5}} \right)$ can be expressed as
\begin{equation}
J_2^{j,k} = \left\{ {\begin{array}{*{20}{l}}
{0,{\rho _3} + {\rho _4} = 0}\\
{\frac{1}{{{\rho _3} + {\rho _4} - {\rho _5}}}{{\left( {{\beta _{j,k}} - \frac{\rho }{{{{\bar \gamma }_0}}}} \right)}^{ - {\rho _2}}}\left\{ {\exp \left( {\frac{\rho }{{{{\bar \gamma }_0}}}} \right)\mathbb {EI}\left( { - \frac{\rho }{{{{\bar \gamma }_0}}}} \right) - \exp \left( {\rho {\lambda _2}} \right)\mathbb {EI}\left( { - \rho {\lambda _2}} \right)} \right\},{\rho _3} + {\rho _4} \ne 0}
\end{array}} \right.
\end{equation}
where $\rho=\frac{\rho_5}{\rho_3 + \rho_4}$.}
\end{theorem}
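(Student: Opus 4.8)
The plan is to substitute the approximate CDF $F_{\Upsilon _{{T_1}}^{\rm{UB}}}^{\rm{App}}(\gamma)$ of Lemma 2 into (31) and evaluate the integral term by term. Writing out $1-F_{\Upsilon _{{T_1}}^{\rm{UB}}}^{\rm{App}}(\gamma)$ via (11) with $\rm M$ replaced by $\tilde{\rm M}$, the integrand $\frac{1-F_{\Upsilon _{{T_1}}^{\rm{UB}}}^{\rm{App}}(\gamma)}{1+\gamma}$ splits naturally into a single-sum \emph{direct-channel} contribution
\[
\frac{\bar\gamma_0}{\tau}\sum_j\phi_{T_1,j}\int_0^\infty\frac{\exp(-\gamma/\bar\gamma_0)}{(1+\gamma)(\gamma+\bar\gamma_0/\xi_{T_1,j})}\,d\gamma
\]
and a double-sum \emph{relay-channel} contribution built from the four $\tilde{\rm M}$ terms and the two $\Lambda$ terms carried over from (11). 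Accordingly I define $\tilde{\rm M}^{\cal R}(\rho_1,\ldots,\rho_5)=\int_0^\infty\frac{\tilde{\rm M}(\rho_1,\ldots,\rho_5)}{1+\gamma}\,d\gamma$ and $\Lambda^{\cal R}(\rho_1,\rho_2)=\int_0^\infty\frac{\Lambda(\rho_1,\rho_2)}{1+\gamma}\,d\gamma$, so that the global $1/\tau$ and the $\omega_2/\bar\gamma_0$ prefactor of (11) reproduce exactly the outer structure of (32).

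For the direct-channel integral I would partial-fraction $\frac{1}{(1+\gamma)(\gamma+\bar\gamma_0/\xi_{T_1,j})}$ and invoke the elementary identity $\int_0^\infty\frac{\exp(-\gamma/\bar\gamma_0)}{\gamma+a}\,d\gamma=-\exp(a/\bar\gamma_0)\,\mathbb{EI}(-a/\bar\gamma_0)$. Taking $a=1$ and $a=\bar\gamma_0/\xi_{T_1,j}$ (so $a/\bar\gamma_0=1/\xi_{T_1,j}$) and absorbing the partial-fraction weight $1/(\bar\gamma_0/\xi_{T_1,j}-1)=-\chi_j$ reproduces the first line of (32) term for term.

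The relay contribution is the laborious part. In each $\tilde{\rm M}$ of (14) I would first use the identity $\Phi_i+\lambda_i=1/\bar\gamma_0$, which is immediate from the definitions of $\Phi_i$ and $\lambda_i$, to rewrite $\exp(-\lambda_2\gamma)[1-\exp(-\Phi_2\gamma)]=\exp(-\lambda_2\gamma)-\exp(-\gamma/\bar\gamma_0)$, leaving two clean exponential rates $\lambda_2$ and $1/\bar\gamma_0$. I would then decompose $\frac{(\gamma/\bar\gamma_0+\beta_{j,k})^{-\rho_2}}{(1+\gamma)[(\rho_3+\rho_4)\gamma+\rho_5]}$ into its simple pole at $\gamma=-1$, its simple pole at $\gamma=-\rho$ with $\rho=\rho_5/(\rho_3+\rho_4)$, and its pole of order $\rho_2$ at $\gamma=-\bar\gamma_0\beta_{j,k}$; the coefficients attached to this last pole are precisely the derivatives $\ell_i$ of (33). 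Integrating each elementary fraction against $\exp(-\lambda_2\gamma)-\exp(-\gamma/\bar\gamma_0)$ and re-using the exponential-integral formula turns the two simple-pole pieces into differences of $\mathbb{EI}$ terms — the pole at $\gamma=-\rho$ yields exactly $J_2^{j,k}$ of (34), with the $\rho_3+\rho_4=0$ branch covering the degenerate terms $\tilde{\rm M}^{\cal R}(1,1,-\frac{1}{\bar\gamma_2},\frac{1}{\bar\gamma_2},\frac{\omega_2}{\xi_{R,k}})$ where that factor is absent — while the order-$\rho_2$ pole produces upper incomplete gamma terms $\mathbb{U}(\cdot,\cdot)$ through $\int_0^\infty\frac{\exp(-s\gamma)}{(\gamma+a)^i}\,d\gamma$. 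The simpler $\Lambda^{\cal R}$ integrals (a rational factor times a single exponential, with no series) are handled identically and contribute two further $\mathbb{EI}$ differences.

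The main obstacle will be the bookkeeping around the order-$\rho_2$ pole at $\gamma=-\bar\gamma_0\beta_{j,k}$: assembling the $\ell_i$ through repeated differentiation, tracking the powers of $\bar\gamma_0$ that appear in (33), and expressing the higher-order integrals $\int_0^\infty\exp(-s\gamma)(\gamma+a)^{-i}\,d\gamma$ in terms of $\mathbb{U}(\cdot,\cdot)$ without introducing spurious poles. As in the earlier proofs, validity of the partial fractions rests on the distinct-pole assumption of footnote 6, so the coincidence cases in which $\gamma=-1$, $\gamma=-\rho$ and $\gamma=-\bar\gamma_0\beta_{j,k}$ collide must be excluded or treated separately. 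Once these elementary integrals are assembled into $\tilde{\rm M}^{\cal R}$ and $\Lambda^{\cal R}$ and summed over $j,k$, the claimed closed form (32) follows.
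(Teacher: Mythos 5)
Your proposal is correct and follows essentially the same route as the paper: substitute the approximate CDF of Lemma 2 into the rate integral, evaluate the direct-channel piece with a partial fraction and the identity $\int_0^\infty \frac{e^{-\mu\gamma}}{\gamma+a}\,d\gamma=-e^{a\mu}\,\mathbb{EI}(-a\mu)$, and handle the $\tilde{\rm M}^{\cal R}$ and $\Lambda^{\cal R}$ terms by partial-fractioning the rational factors and reducing to exponential-integral and upper-incomplete-gamma integrals (the paper's citations [28, 3.352.4] and [28, 3.382.4]). Your explicit use of $\Phi_2+\lambda_2=1/\bar\gamma_0$ and your identification of the $\rho_3+\rho_4=0$ degenerate branch of $J_2^{j,k}$ are exactly the details the paper leaves implicit.
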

\begin{proof}
The proof is similar with that for Theorem 3 given by Appendix D. Specifically, substituting (11) into (29) and using [28, 3.352.4], one can express $R_{{T_1}}^{{\rm{App}}}$ as in the form of (30). To obtain expression of ${\rm M}^{\cal R}\left( {{\rho _1},{\rho _2},{\rho _3},{\rho _4},{\rho _5}} \right)$ and ${\Lambda ^{\cal R}}\left( {{\rho _1},{\rho _2}} \right)$, one can first apply partial fraction on the fractional terms in the integrals and then use [28, 3.352.4] and/or [28, 3.382.4] on the resultant integrals.
\end{proof}

Again, the expression of ergodic sum rate can be simply derived by exploiting the symmetry between two terminals.

\section{Parameters Optimization Based on the Asymptotic Analysis}
In this section, we optimize the system parameters based on the asymptotic analysis developed in the previous sections.
The optimization problems are constructed which seek to optimally allocate the power at the relay and find optimal relay location, in order to minimize the protocol outage probability. Note that the similar optimization problems can be constructed based on minimizing the sum BER and the results will be identical, since the sum-BER is a linear function of the protocol outage probability in the high SNR regime according to Theorem 5.

\subsection{Power Allocation at the Relay with Fixed Relay Location}
In this subsection, we derive the optimal power allocation at the relay that minimizes the protocol outage probability, where the relay location is fixed. To facilitate the analysis, we let ${\omega _2} = \omega$, then we have ${\omega _1} = 1 - \omega$. The optimization problem can be written as
\begin{equation}
\begin{aligned}
{\omega ^{{\rm{opt}}}} &= \arg \mathop {\min }\limits_\omega  P_{\rm{pro}}^{{\cal O},\rm{Asy}}\left( {{\gamma _{th}}} \right) \\
&{} \mathop  =  \arg \mathop {\min }\limits_\omega \underbrace {\frac{{{B_2} - {B_1}}}{\bar\gamma_2} + \frac{{{B_1} + {C_1}}}{{{\omega }\bar\gamma_2}} + \frac{{{B_1} - {B_2}}}{{\bar\gamma_1}} + \frac{{{B_2} + {C_2}}}{{\left( {1 - {\omega }} \right)\bar\gamma_1}}}_{{{\cal L}}\left( \omega,D \right)} \\
&{} s.t. \hspace{1.5em} 0 \le \omega  \le 1 \\
\end{aligned}
\end{equation}
where $B_i$ and $C_i$ are expressed as ${B_i} = {{\Gamma ''}_{{T_i}}} + 2{{\Gamma '}_{{T_i}}} + 1$ and ${C_i} = \left( {{{\Gamma '}_R} + 1} \right)\left( {{{\Gamma '}_{{T_i}}} + 1} \right)$, respectively. Since the second derivative of ${{{\cal L}}\left( \omega,D \right)} $ with respect to $\omega$ can be expressed as
\begin{equation}
\frac{{{d^2}{{{\cal L}}\left( \omega,D \right)}}}{{d{\omega ^2}}} = \frac{{2\left( {{B_1} + {C_1}} \right)}}{{{\omega ^3}{{\bar \gamma }_2}}} + \frac{{2\left( {{B_2} + {C_2}} \right)}}{{{{\left( {1 - \omega } \right)}^3}{{\bar \gamma }_1}}} > 0
\end{equation}
when $\omega\in [0,1]$, the optimization problem (33) is convex. The optimal power allocation ${\omega ^{{\rm{opt}}}}$ can be obtained by differentiating (33) with respect to $\omega$ and setting the derivative equal to zero, which can be expressed as
\begin{equation}
{\omega ^{{\rm{opt}}}} = \frac{{\sqrt {\left( {{B_1} + {C_1}} \right){{\bar \gamma }_1}} }}{{\sqrt {\left( {{B_1} + {C_1}} \right){{\bar \gamma }_1}}  + \sqrt {\left( {{B_2} + {C_2}} \right){{\bar \gamma }_2}} }}
\end{equation}

From (35), it is seen that when interference power is very small and the noise power is dominant, i.e., ${{\Gamma '}_N} \ll 1$ and ${{\Gamma ''}_N} \ll 1$ for $N=T_1,T_2,R$, we have $B_i \approx 1$ and $C_i \approx 1$. The optimal power allocation reduces to
\begin{equation}
{\omega ^{{\rm{opt}}}}  \approx \frac{{\sqrt {{{\bar \gamma }_1}} }}{{\sqrt {{{\bar \gamma }_1}}  + \sqrt {{{\bar \gamma }_2}} }} = \frac{{\sqrt {{\Omega _1}} }}{{\sqrt {{\Omega _1}}  + \sqrt {{\Omega _2}} }}
\end{equation}
In this case, ${\omega ^{{\rm{opt}}}}$ relies only on the variances of channels between the terminals and relay.
Note that the result is the same with that for 3P-TWR without interference derived in \cite{IEEEhowto:Louie2010}.

When the interference power is large, to understand the effect of interference, we turn to the special case of one interferer at each node, i.e., $L_{T_1}= L_{T_2} = L_R = 1$. In this case, $B_i$ and $C_i$ can be expressed as
\begin{equation}
\begin{aligned}
&   {B_i} = 2{\left( {{P_{I,{T_i}}}{\Omega _{{T_i},1}}} \right)^2} + 2{P_{I,{T_i}}}{\Omega _{{T_i},1}} + 1 = 2{\left( {{P_{I,{T_i}}}d_{{T_i},1}^{ - v}} \right)^2} + 2{P_{I,{T_i}}}d_{{T_i},1}^{ - v} + 1\\
&{} {C_i} = \left( {{P_{I,R}}{\Omega _{R,1}} + 1} \right)\left( {{P_{I,{T_i}}}{\Omega _{{T_i},1}} + 1} \right) = \left( {{P_{I,R}}d_{R,1}^{ - v} + 1} \right)\left( {{P_{I,{T_i}}}d_{{T_i},1}^{ - v} + 1} \right)
\end{aligned}
\end{equation}
where $d_{N,1}$ denotes the distance between node $N$ and the interferer. From (35) and (37), it is seen that the optimal power allocation reduces to (36) when the average received interference powers at two terminals are symmetric, i.e., ${P_{I,{T_1}}}{\Omega _{{T_1},1}}={P_{I,{T_2}}}{\Omega _{{T_2},1}}$. Furthermore, the optimal power allocation number increases as the $P_{I,T_1}$ increasing or the distance between $T_1$ and the interferer decreasing, which indicates that the relay should increase the power used in forwarding the signal from $R$ to terminal $T_1$. Similar result can be found for terminal $T_2$.

\subsection{Relay Location Optimization with Fixed Power Allocation at the Relay}
To minimize the effect of path loss, the relay should be placed on the straight line between $T_1$ and $T_2$. Therefore, we set the distances between $T_i$ ($i=1,2$) and $R$ as $d_{T_1,R}=1-D$ and $d_{T_2,R}=D$, where $D\in(0,1)$. The optimal relay location that minimizes the protocol outage probability with fixed $\omega$ can be derived by solving the following optimization problem
\begin{equation}
\begin{aligned}
{D^{{\rm{opt}}}} & = \arg \mathop {\min }\limits_D {{{\cal L}}\left( \omega,D \right)}\\
&{} s.t. \hspace{1.5em} 0 < D  < 1
\end{aligned}
\end{equation}
It is easy to show that the second derivative of $ {{{\cal L}}\left( \omega,D \right)}$ with respect to $D$ is strictly positive when $D\in(0,1)$. Therefore, the optimal relay location can be derived by differentiating $ {{{\cal L}}\left( \omega,D \right)}$ with respect to $D$ and setting the derivative equal to zero, which can be expressed as
\begin{equation}
{D^{{\rm{opt}}}} = \frac{1}{{{{\left( {\frac{{\omega \left( {1 - \omega } \right)\left( {{B_2} - {B_1}} \right) + \left( {1 - \omega } \right)\left( {{B_1} + {C_1}} \right)}}{{\omega \left( {1 - \omega } \right)\left( {{B_1} - {B_2}} \right) + \omega \left( {{B_2} + {C_2}} \right)}}} \right)}^{\frac{1}{{v - 1}}}} + 1}}
\end{equation}

For the case that the noise power is dominant, we have
\begin{equation}
{D^{{\rm{opt}}}} \approx \frac{{{\omega ^{\frac{1}{{v - 1}}}}}}{{{{\left( {1 - \omega } \right)}^{\frac{1}{{v - 1}}}} + {\omega ^{\frac{1}{{v - 1}}}}}}
\end{equation}
Note that the value of path-loss exponent $v$ is normally in the range from 2 to 6 \cite{IEEEhowto:Rappaport}. As a result, we can conclude that the relay should be placed near $T_1$ ($D>0.5$) when more relay power is allocated to forward the signal from relay to $T_1$ ($\omega > 0.5$) when the noise power is dominant. Similarly, we have $D<0.5$ when $\omega < 0.5$.

When the interference is large, we focus on the special case of one interferer at each node. In this case, $B_i$ and $C_i$ can be expressed as in (37). From (39), when the average received interference powers at $T_1$ and $T_2$ are symmetric, i.e., ${P_{I,{T_1}}}{\Omega _{{T_1},1}}={P_{I,{T_2}}}{\Omega _{{T_2},1}}$, we can see that the optimal relay location reduces to (40) after a few manipulations. This indicates that the optimal $D$ is decided by the power allocation at the relay in this case. When the average received interference powers at $T_1$ and $T_2$ are asymmetric, i.e., ${P_{I,{T_1}}}{\Omega _{{T_1},1}}={P_{I,{T_2}}}{\Omega _{{T_2},1}}$, we consider the case $\omega=0.5$. The optimal relay location reduces to
\begin{equation}
{D^{{\rm{opt}}}} = \frac{{{{\left( {{B_1} + {B_2} + 2{C_2}} \right)}^{\frac{1}{{v - 1}}}}}}{{{{\left( {{B_1} + {B_2} + 2{C_1}} \right)}^{\frac{1}{{v - 1}}}} + {{\left( {{B_1} + {B_2} + 2{C_2}} \right)}^{\frac{1}{{v - 1}}}}}}
\end{equation}
According to the expressions of $B_i$ and $C_i$, the relay should be placed near the terminal with larger average received interference power in order to minimize the protocol outage (or sum BER) performance.

\subsection{Joint Optimization of Power Allocation at the Relay and Relay Location}
As shown in (35) and (39), the optimal $\omega$ and $D$ are not independent. As a result, jointly optimizing these two parameters can achieve better performance. The optimization problem can be formulated as
\begin{equation}
\begin{aligned}
&   \left( {{\omega ^{{\rm{opt}}}},{D^{{\rm{opt}}}}} \right)  =\arg \mathop {\min }\limits_{\omega,D} {{{\cal L}}\left( \omega,D \right)} \\
&{} s.t. \hspace{1.5em} 0 \le \omega  \le 1, 0 < D  < 1
\end{aligned}
\end{equation}

\newtheorem{Corollary}{\emph{Corollary}}
\begin{Corollary}\label{t1}
\emph{When ${{\Gamma '}_{{T_1}}}={{\Gamma '}_{{T_2}}}$ and ${{\Gamma ''}_{{T_1}}}={{\Gamma ''}_{{T_2}}}$, the optimal power allocation at the relay and optimal relay location are $\omega=0.5$ and $D=0.5$.}
\end{Corollary}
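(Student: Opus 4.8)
The plan is to exploit the symmetry that the hypotheses induce in the objective ${{\cal L}}\left( \omega,D \right)$ of (33). First I would note that $\Gamma'_{T_1}=\Gamma'_{T_2}$ and $\Gamma''_{T_1}=\Gamma''_{T_2}$ force $B_1=B_2$ and $C_1=C_2$ (both $B_i$ and $C_i$ depend on $T_i$ only through $\Gamma'_{T_i},\Gamma''_{T_i}$); write $B$ and $C$ for the common values. The two terms $\left(B_2-B_1\right)/\bar\gamma_2$ and $\left(B_1-B_2\right)/\bar\gamma_1$ in (33) then vanish, so the objective collapses to
\[
{{\cal L}}\left( \omega,D \right)=\frac{B+C}{\omega\,\bar\gamma_2}+\frac{B+C}{\left(1-\omega\right)\bar\gamma_1}.
\]
Substituting $\bar\gamma_1=P\left(1-D\right)^{-v}$ and $\bar\gamma_2=P D^{-v}$ (from $\bar\gamma_i=P\Omega_i$ with $d_{T_1,R}=1-D$, $d_{T_2,R}=D$) turns this into $\tfrac{B+C}{P}\bigl(D^{v}/\omega+\left(1-D\right)^{v}/\left(1-\omega\right)\bigr)$, whose minimizer is unaffected by the positive prefactor $\left(B+C\right)/P$.

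Next I would profile out $\omega$. Since (34) guarantees strict convexity in $\omega$ for each fixed $D$, I can substitute the stationary point (35), which in the symmetric case reduces to $\omega^{\rm opt}\left(D\right)=D^{v/2}/\bigl(D^{v/2}+\left(1-D\right)^{v/2}\bigr)$. A short calculation (the two resulting terms become $\sqrt a\,(\sqrt a+\sqrt b)$ and $\sqrt b\,(\sqrt a+\sqrt b)$ with $a=1/\bar\gamma_2$, $b=1/\bar\gamma_1$) shows the profiled objective equals, up to the prefactor, $\bigl(D^{v/2}+\left(1-D\right)^{v/2}\bigr)^{2}$. It then remains to minimize $f\left(D\right)=D^{v/2}+\left(1-D\right)^{v/2}$ over $D\in\left(0,1\right)$.

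The core of the argument is to show $f$ is minimized uniquely at $D=0.5$. I would note that $f$ is symmetric under $D\mapsto 1-D$ and that
\[
f''\left(D\right)=\tfrac{v}{2}\left(\tfrac{v}{2}-1\right)\bigl(D^{v/2-2}+\left(1-D\right)^{v/2-2}\bigr)>0
\]
whenever $v>2$, so $f$ is strictly convex; its unique stationary point, forced by $f'\left(D\right)=0\Leftrightarrow D^{v/2-1}=\left(1-D\right)^{v/2-1}\Leftrightarrow D=1-D$, is $D=0.5$. Back-substituting $D=0.5$ into $\omega^{\rm opt}\left(D\right)$ yields $\omega=0.5$, giving the claimed joint optimizer. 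Equivalently, one can argue directly that the reduced two-variable objective is invariant under the simultaneous reflection $\left(\omega,D\right)\mapsto\left(1-\omega,1-D\right)$ and, being strictly convex after profiling with a unique interior critical point, must attain its minimum at the fixed point $\left(0.5,0.5\right)$.

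The step I expect to be the main obstacle is pinning down that $\left(0.5,0.5\right)$ is a genuine global minimizer rather than merely a critical point: the reflection symmetry alone does not settle this, and second-order information is required. The clean route is to first eliminate $\omega$ using the per-variable convexity (34) and only then invoke strict convexity of $f$, which holds over the physically relevant range $v>2$. For the boundary case $v=2$ one notes $f\equiv 1$, so the reduced objective is flat in $D$ and $\left(0.5,0.5\right)$ remains an optimizer, though no longer the unique one.
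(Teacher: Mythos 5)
Your proof is correct, but it takes a genuinely different route from the paper's. The paper works on the joint two-variable objective: with $B_1=B_2$ and $C_1=C_2$ it differentiates ${\cal L}(\omega,D)$ twice, asserts the $2\times 2$ Hessian is positive semi-definite for $v\ge 2$, and then solves the two stationary equations $\partial{\cal L}/\partial\omega=0$ and $\partial{\cal L}/\partial D=0$ jointly to land on $(0.5,0.5)$. You instead eliminate $\omega$ first using the per-variable convexity already established in (34), which collapses the problem to minimizing $\bigl(D^{v/2}+(1-D)^{v/2}\bigr)^2$ in one variable; strict convexity of $D^{v/2}+(1-D)^{v/2}$ for $v>2$ then forces $D=0.5$, and back-substitution into $\omega^{\rm opt}(D)$ gives $\omega=0.5$. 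Your route buys a more transparent certificate of global optimality --- the paper's Hessian claim is not spelled out and in fact requires a Cauchy--Schwarz-type inequality to verify, whereas your one-dimensional second derivative is immediate --- and it pinpoints exactly where the condition on $v$ enters, including the degenerate case $v=2$ where the minimizer is no longer unique (any point with $\omega=D$ is optimal), a subtlety hidden in the paper's ``semi-definite'' phrasing. Both arguments share the same starting reduction ($B_1=B_2$, $C_1=C_2$, and $\bar\gamma_1\propto(1-D)^{-v}$, $\bar\gamma_2\propto D^{-v}$); only the optimality certificate differs, and yours is the more elementary and more self-contained of the two.
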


\begin{proof}
When ${{\Gamma '}_{{T_1}}}={{\Gamma '}_{{T_2}}}$ and ${{\Gamma ''}_{{T_1}}}={{\Gamma ''}_{{T_2}}}$, we have $B_1-B_2=0$ and $C_1-C_2=0$. In this case, differentiating ${{{\cal L}}\left( \omega,D \right)}$ with respect $\omega$ and $D$ twice, one can show that the Hessian matrix of ${{{\cal L}}\left( \omega,D \right)}$ is positive semi-definite when $v\ge2$. Therefore, solving equations $\frac{{\partial {\cal L}\left( {\omega ,D} \right)}}{{\partial \omega }} = 0$ and $\frac{{\partial {\cal L}\left( {\omega ,D} \right)}}{{\partial D}} = 0$ jointly, one can obtain the result in Corollary 1.
\end{proof}

Unfortunately, for the general case of ${{\Gamma '}_{{T_1}}} \ne {\Gamma '}_{{T_2}}$ or ${{\Gamma ''}_{{T_1}}} \ne {\Gamma ''}_{{T_2}}$, we can not prove the optimization problem is convex. Even it can be proved, the solution is hard to be obtained with closed-form in this case. As a result, with (35) and (39) in the previous subsections, we resort to a simple but still efficient alternating optimization approach \cite{IEEEhowto:Pun2004GC},\cite{IEEEhowto:Ziskind1988} to deal with this problem. The algorithm is given in the below
\begin{enumerate}
\item Initialize $D$ as $D = D^{(0)}=0.5$.
\item At the $l$th iteration ($l \ge 1$), update $\omega=\omega^{(l)}$ using (35), where $D$ is set to $D=D^{(l-1)}$.
\item Update $D=D^{(l)}$ using (39), where $\omega$ is set to $\omega=\omega^{(l)}$.
\item Set $l=l+1$ and go back to step 2), until the algorithm reaches the preassigned number of {iterations\footnotemark[7]}.
\end{enumerate}
{\footnotetext[7]{Strictly speaking, the algorithm should be terminated when ${{{\cal L}}\left( \omega,D \right)}$ does not change significantly. However, to avoid computing ${{{\cal L}}\left( \omega,D \right)}$ at each iteration, we fix the number of iterations.}}

Since some minimizations are performed at each iteration, the value of ${{{\cal L}}\left( \omega,D \right)}$ can not increase. As a result, the algorithm is bound to converge to a local minimum \cite{IEEEhowto:Pun2004GC},\cite{IEEEhowto:Ziskind1988}. As will be shown by the simulations, with only a few iterations, the proposed algorithm can achieve almost the same performance compared with the scheme using optimal $\omega$ and $D$ obtained by exhaustion method.

\section{Simulation Results and Discussion}

In this section, we present the simulation results to verify our theoretical analyses in the previous sections. We assume that the terminals and relay are placed in a straight line and the relay is set between $T_1$ and $T_2$. The normalized distance between $T_1$ and $T_2$ is set to one. Moreover, the variance of $c_{N,k}$ ($N\in\{T_1,T_2,R\}$) is assumed to be evenly distributed on the interval $[0.1,1]$, then we have $\Omega_{N,k}=0.1+\frac{0.9}{L_{N}-1}(k-1)$ for $L_N\ge2$. For $L_N=1$, we set $\Omega_{N,1}= 1$.

In Fig. 3-Fig. 5, the performance is simulated where the interference at two terminal is symmetric, i.e., $P_{I,T_1}=P_{I,T_2}$ and $L_{T_1}=L_{T_2}$. We present only the performance when $\omega=0.5$ and $D=0.5$, since this setup leads to the optimal protocol outage and sum BER performances in this case according to Corollary 1.

\begin{figure}[t]
\centering
\includegraphics[width=8.0cm]{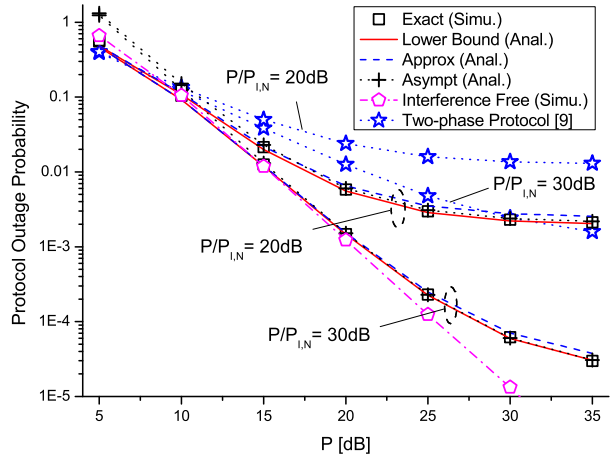}
\caption{Protocol outage performance versus $P$, $L_{N}=2$, $N\in\{T_1,T_2,R\}$, $\gamma_{th}=7$ (Corresponding to 1bit/s/Hz target rate), $\omega = 0.5$, $D= 0.5$.} \label{fig:graph}
\end{figure}

\begin{figure}[t]
\centering
\includegraphics[width=8.0cm]{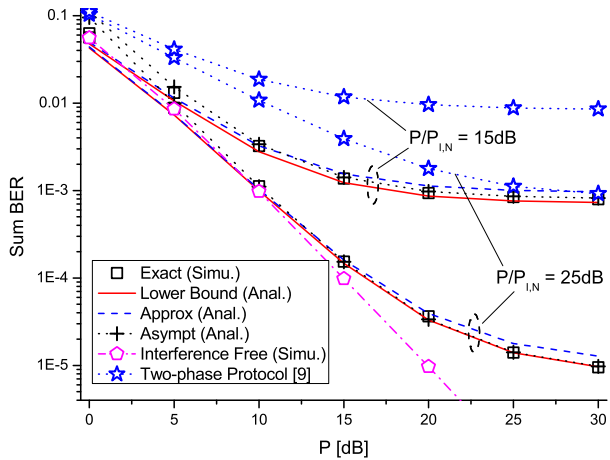}
\caption{Sum BER performance versus $P$, $L_{N}=5$, $N\in\{T_1,T_2,R\}$, $\omega = 0.5$, $D= 0.5$.} \label{fig:graph}
\end{figure}

In Fig. 3 and Fig. 4, the protocol outage and sum BER performances of 3P-TWR protocol are plotted as a function of the transmit power $P$. The performances of the interference free scenario are also presented as benchmark. We can see that the proposed lower bounds yield results in good agreement with the exact results derived by Monte Carlo simulations in the whole observation interval. Meanwhile, the approximate expressions perform better than the asymptotic expressions in the low SNR region whereas the asymptotic expressions do better in the moderate and high SNR regions. Due to this observation, one can estimate the protocol outage and sum BER performances efficiently by selectively using the approximate and asymptotic expressions depending on the transmitted SNR. From the figures, performance floors can be observed in the high SNR region which indicates that the achievable diversity of 3P-TWR protocol in the interference-limited scenario is zero. Moreover, as shown in the figure, the 3P-TWR protocol performs better in protocol outage and sum-BER performances compared with the 2P-TWR protocol in the interference-limited scenario. The result suggests that the three-phase protocol is a good choice when network has a strict requirement on reliability.

\begin{figure}[t]
\centering
\includegraphics[width=8.0cm]{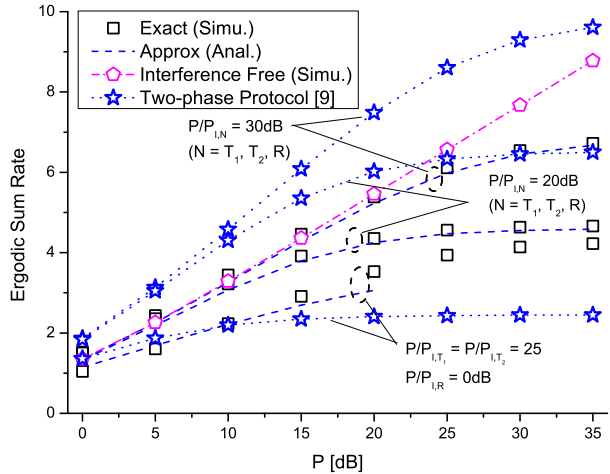}
\caption{Ergodic sum rate performance versus $P$, $L_{N}=5$, $N\in\{T_1,T_2,R\}$, $\omega _1= 0.5$, $D_1=0.5$.} \label{fig:graph}
\end{figure}

Fig. 5 depicts the ergodic sum rate performance of 3P-TWR protocol against the transmitted power $P$. As shown in the figure, the ergodic sum rate degrades as the interference power increasing as expected. The performance floor in the high SNR region is because we assume the ratio of useful power to interference power is constant. Moreover, it is interesting to see that when the interference power at the relay is much larger than that at the terminals, the 3P-TWR protocol outperforms 2P-TWR protocol in ergodic sum rate, which is in sharp contrast with the situation in interference-free scenario. This is because the 2P-TWR protocol uses only the terminal-relay-terminal channel, whose received SINR is degraded greatly by the interference at the relay. However, the 3P-TWR protocol exploits the direct channel, thus can achieve better performance.

The performances of 3P-TWR protocol with power and relay location optimization are testified in Fig. 6-Fig. 8. We set $P/P_{I,R} = P/P_{I,T_1}$ and $P/P_{I,T_1} \ne P/P_{I,T_2}$, which represents an asymmetric interference power profile at two terminals.

Fig. 6 shows the convergence property of the proposed joint optimization scheme in Sec. VII-C. From the figure, the performance of joint optimization converges to the optimal performance obtained by exhaustion method with three iterations. Moreover, it is seen that the scheme which optimizes only the relay location achieves nearly the same performance with the scheme without optimization (i.e., with zero iteration) when $\omega=0.5$ and the interference power at the relay is small. This is because, in this case, we have $B_1+B_2>2C_1$ and $B_1+B_2>2C_2$. As a result, the optimal relay location will be very close to 0.5 according to (41).

Fig. 7 and Fig. 8 present the protocol outage and sum BER performances as a function of $P$. From the figures, we can see that the joint optimization with a few iterations can provide significant performance gain compared with the scheme without optimization. From the perspective of the practical implementation, it is seen that the optimal number of iterations is two in this case, since the performance gain provided by the third iteration is quite small and negligible.

\begin{figure}[t]
\centering
\includegraphics[width=8.0cm]{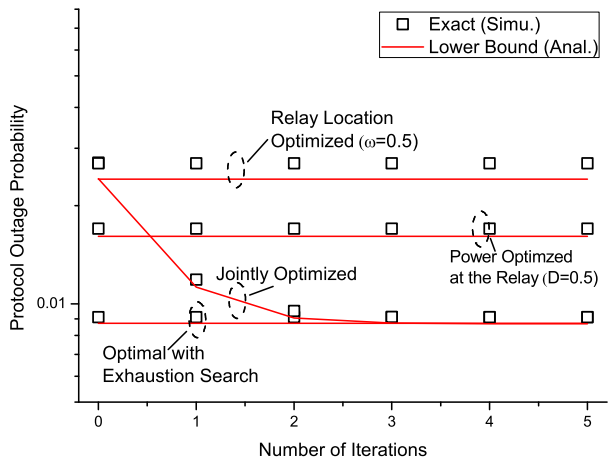}
\caption{Protocol outage performance versus number of iterations, $L_{N}=5$, $N\in\{T_1,T_2,R\}$, $P/P_{I,R} = P/P_{I,T_1}=25$dB, $P/P_{I,T_2}=15$dB, $\gamma_{th}=7$.} \label{fig:graph}
\end{figure}

\begin{figure}[t]
\centering
\includegraphics[width=8.0cm]{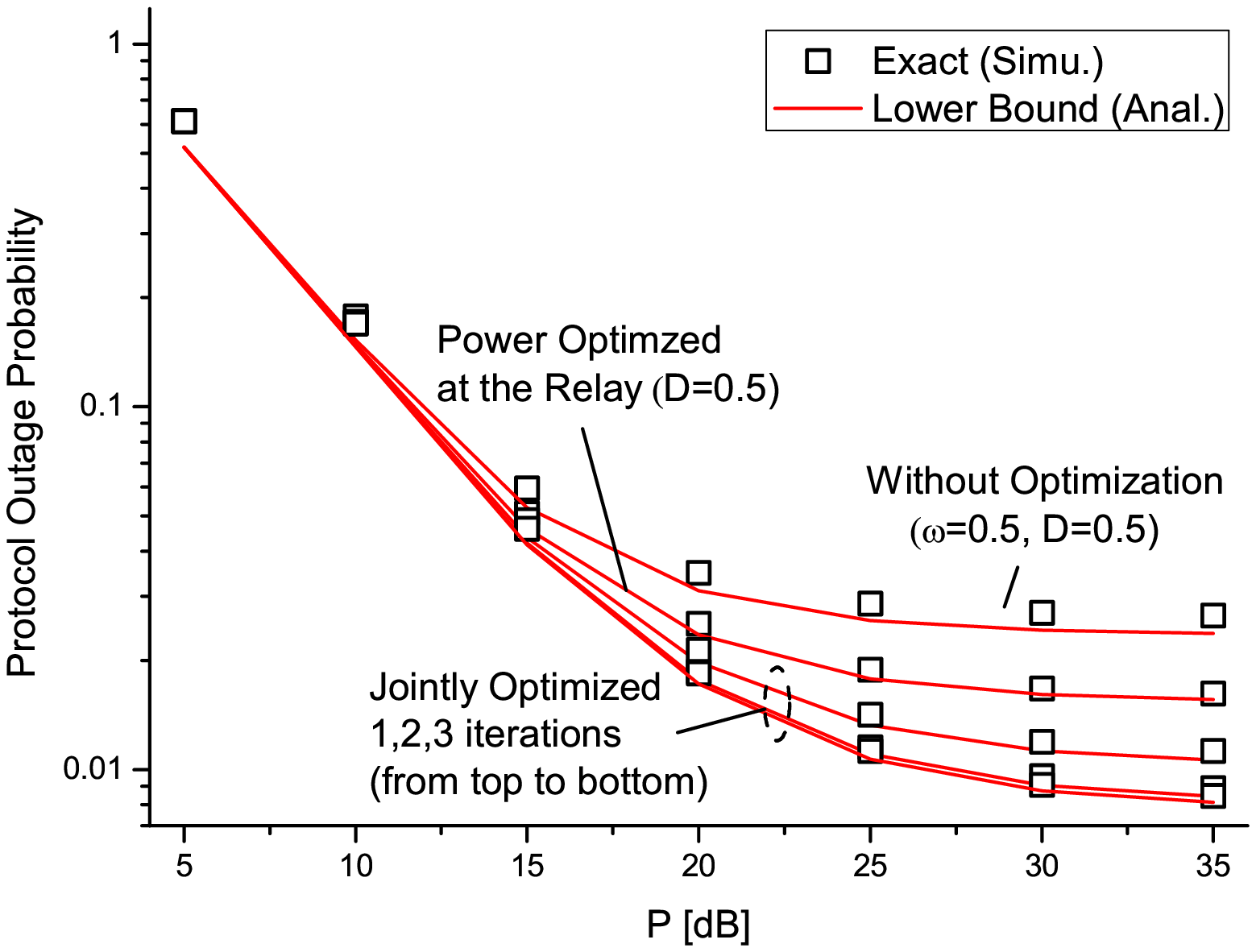}
\caption{Protocol outage performance versus $P$, $L_{N}=5$, $N\in\{T_1,T_2,R\}$, $P/P_{I,R} = P/P_{I,T_1}=25$dB, $P/P_{I,T_2}=15$dB, $\gamma_{th}=7$.} \label{fig:graph}
\end{figure}

\begin{figure}[t]
\centering
\includegraphics[width=8.0cm]{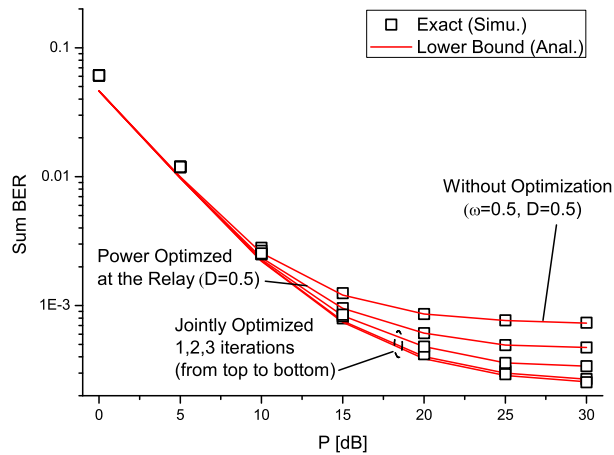}
\caption{Sum BER performance versus $P$, $L_{N}=5$, $N\in\{T_1,T_2,R\}$, $P/P_{I,R} = P/P_{I,T_1}=20$dB, $P/P_{I,T_2}=10$dB, $N\in\{T_1,T_2,R\}$.} \label{fig:graph}
\end{figure}

The effects of imperfect CSI are analyzed in Fig. 9 and Fig. 10. The actual and the estimated channels are modeled as \cite{IEEEhowto:Yang2013TC} ${h_i} = {\hat h_i} + {e_i}$ ($i\in\{1,2,3\}$) and ${c_{N,k}} = {\hat c_{N,k}} + {e_{N,k}}$ ($N\in\{T_1,T_2,N\}$), where ${\hat h_i}$ and ${\hat c_{N,k}}$ denote the estimates of ${h_i}$ and ${c_{N,k}}$, respectively. ${e_i}$ (${e_{N,k}}$) denotes the estimation error of ${h_i}$ (${c_{N,k}}$), which is an independent zero-mean complex Gaussian RV with variance $\sigma_h \Omega_i$ ($\sigma_{c}\Omega_{N,k}$) \cite{IEEEhowto:Yang2013TC}. The gain at the relay and received SINRs at the terminals can be computed using the method in \cite{IEEEhowto:Yang2013TC}. Since the results for protocol outage and sum BER are similar, we present only the performances for protocol outage and ergodic sum rate in the following.

\begin{figure}[t]
\centering
\includegraphics[width=8.0cm]{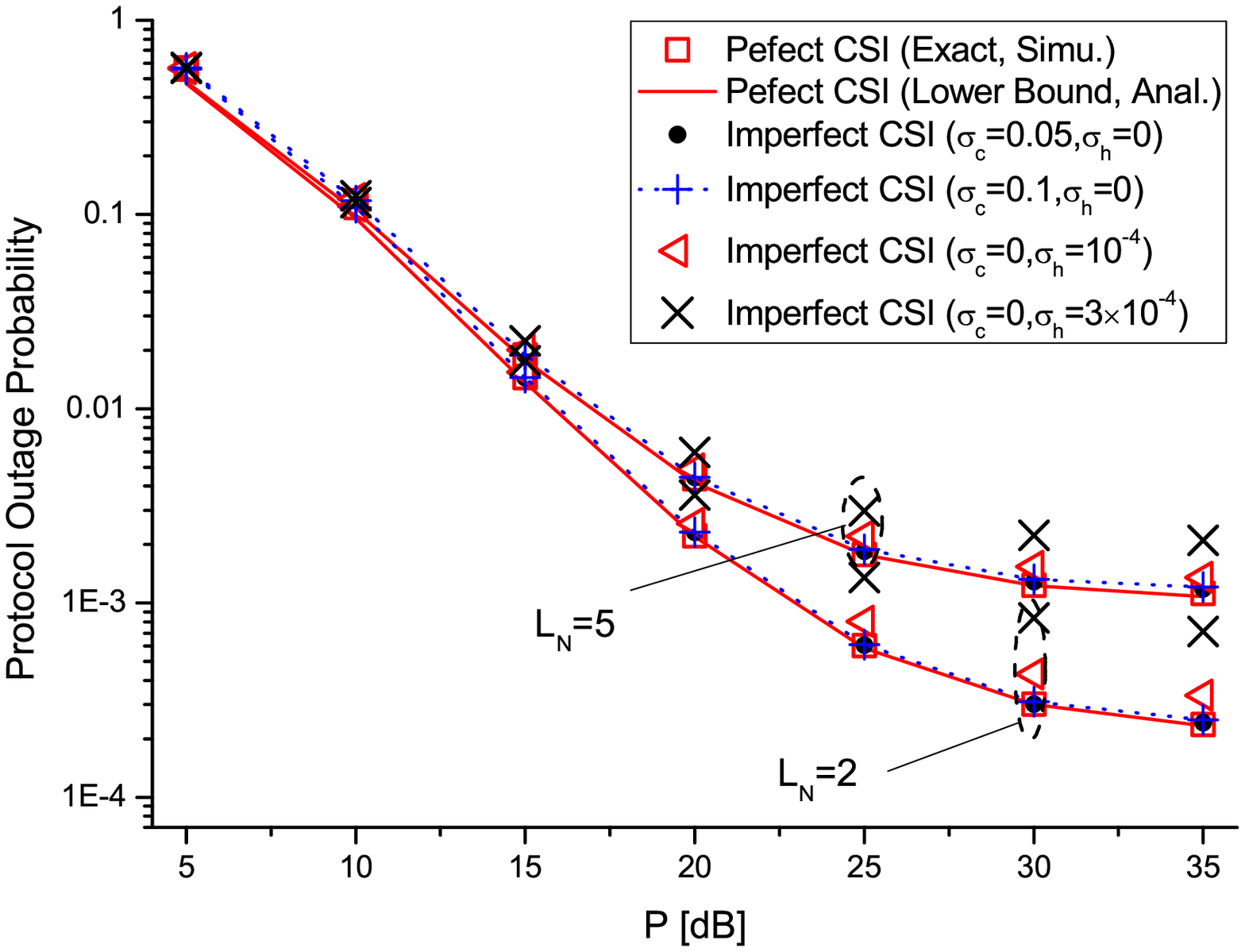}
\caption{Comparison of protocol outage performances with perfect and imperfect CSI, $P/P_{I,N} = 25$dB, $N\in\{T_1,T_2,R\}$, $\gamma_{th}=7$, $\omega=0.5$, $D=0.5$.} \label{fig:graph}
\end{figure}

\begin{figure}[t]
\centering
\includegraphics[width=8.0cm]{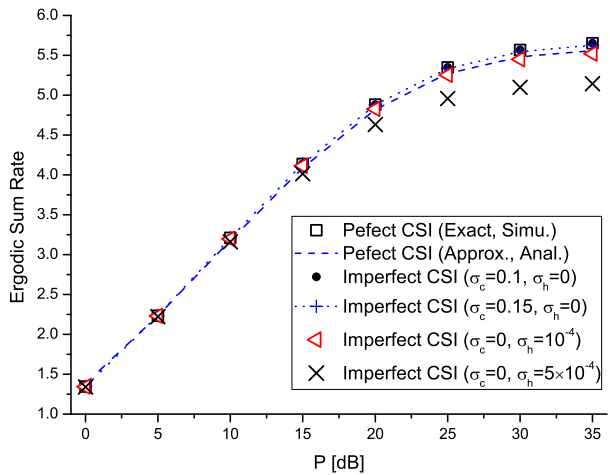}
\caption{Comparison of ergodic sum rate performances with perfect and imperfect CSI, $L_{N}=5$, $N\in\{T_1,T_2,R\}$, $P/P_{I,N} = 25$dB, $\omega = 0.5$, $D=0.5$.} \label{fig:graph}
\end{figure}

From Fig. 9 and Fig. 10, we can see that the performance with perfect CSI provides a bound for other practical scenarios with imperfect channel estimation. Meanwhile, it is interesting to see that an estimation error on the channel coefficient of interference channel (i.e., $c_{N,k}$) will not affect the protocol outage and ergodic sum rate performances significantly. This is because, in this case, a small fraction of the interference signal is actually treated as noise in the calculation of received SINRs at the terminals \cite{IEEEhowto:Yang2013TC}, which will not change the received SINR greatly. However, it is seen that a very small estimation error on the channel coefficients between terminals and relay (i.e., $h_i$, $i\in\{0,1,2\}$) will degrade the performance considerably, since the estimation error on $h_i$ introduces new interference.

\section{Conclusion}

The effect of CCI on the three-phase AF-based TWR protocol is considered in this paper for Rayleigh fading channels. The lower bounds, approximate expressions and asymptotic expressions for protocol outage probability and sum BER are derived. Moreover, the approximate expression for ergodic sum rate is derived. These expressions are valid for arbitrary positive numbers of interferers at the terminals and relay. The performances of 2P-TWR protocol and 3P-TWR protocol are compared. The results show that when the interference power at the relay is much larger than that at the terminals, the 3P-TWR protocol outperforms 2P-TWR protocol in ergodic sum rate, which is in sharp contrast with the situation in interference-free scenario. The system parameters, i.e., the power allocation at the relay and relay location, are optimized based on the asymptotic expressions in order to minimize the protocol outage and sum BER performances in interference-limited scenario. The results show that when the average received interference powers at two terminals are asymmetric, jointly optimizing the power and relay location can give rise to the best performance.

\section*{Appendix A: Proof of (5)}
We prove the result for $T_1$ and the result for $T_2$ is similar. According to the principle of MRC, the combined signal at $T_1$ can be expressed as
\begin{equation*}
y_{{T_1}}^{MRC} = {{\cal C}_1}y_{{T_1}}^{[3]} + {{\cal C}_2}y_{{T_1}}^{[2]}
\end{equation*}
where the combining coefficients can be expressed as
\begin{equation*}
\begin{aligned}
&   {{\cal C}_1} = \frac{{P{{\cal A}_2}h_1^*h_2^*}}{{P{\gamma _1}\left( {{\cal A}_1^2 + {\cal A}_2^2} \right)\left( {{\Gamma _R} + 1} \right) + {\Gamma _{{T_1}}} + 1}}\\
&{} {{\cal C}_2} = \frac{{\sqrt P h_0^*}}{{{\Gamma _{{T_1}}} + 1}}
\end{aligned}
\end{equation*}
According to the expressions of ${{\cal C}_1}$ and ${{\cal C}_2}$, $y_{{T_1}}^{MRC}$ can be rewritten as
\begin{equation*}
\begin{aligned}
y_{{T_1}}^{MRC} &= \left\{ {\frac{{{P^2}{\cal A}_2^2{\gamma _1}{\gamma _2}}}{{P{\gamma _1}\left( {{\cal A}_1^2 + {\cal A}_2^2} \right)\left( {{\Gamma _R} + 1} \right) + {\Gamma _{{T_1}}} + 1}} + \frac{{P{\gamma _0}}}{{{\Gamma _{{T_1}}} + 1}}} \right\}{S_2}\\
&{} + \frac{{P{{\cal A}_2}h_1^*h_2^*}}{{P{\gamma _1}\left( {{\cal A}_1^2 + {\cal A}_2^2} \right)\left( {{\Gamma _R} + 1} \right) + {\Gamma _{{T_1}}} + 1}}\\
&{} \times \left( {\sqrt P {h_1}{{\cal A}_1}\sum\limits_{k = 1}^{{L_R}} {\sqrt {{P_{I,R}}} {c_{R,k}}I_{R,k}^{[1]}}  + \sqrt P {h_1}{{\cal A}_2}\sum\limits_{k = 1}^{{L_R}} {\sqrt {{P_{I,R}}} {c_{R,k}}I_{R,k}^{[2]}}  + \sum\limits_{k = 1}^{{L_{{T_1}}}} {\sqrt {{P_{I,{T_1}}}} {c_{{T_1},k}}} I_{{T_1},k}^{[3]}} \right)\\
&{} + \frac{{\sqrt P h_0^*}}{{{\Gamma _{{T_1}}} + 1}}\left( {\sum\limits_{k = 1}^{{L_{{T_1}}}} {\sqrt {{P_{I,{T_1}}}} {c_{{T_1},k}}I_{{T_1},k}^{[2]}}  + n_{{T_1}}^{[2]}} \right)
\end{aligned}
\end{equation*}
As a result, the received SINR at $T_i$ can be expressed as
\begin{equation*}
{\Upsilon _{{T_1}}} = \frac{{{\gamma _0}}}{{{\Gamma _{{T_1}}} + 1}} + \frac{{{\cal A}_2^2{\gamma _1}{\gamma _2}}}{{\left( {{\cal A}_1^2 + {\cal A}_2^2} \right)\left( {{\Gamma _R} + 1} \right){\gamma _1} + {\Gamma _{{T_1}}} + 1}}
\end{equation*}
Substituting the expressions ${\cal A}_1$ and ${\cal A}_2$ into the above equation, we have
\begin{equation*}
\begin{aligned}
& {\Upsilon _{{T_1}}} = \frac{{{\gamma _0}}}{{{\Gamma _{{T_1}}} + 1}} + \frac{{{\omega _2}{\gamma _1}{\gamma _2}}}{{\left( {{\Gamma _R} + 1} \right){\gamma _1} + \left( {{\omega _1}{\gamma _1} + {\omega _2}{\gamma _2} + {\Gamma _R} + 1} \right)\left( {{\Gamma _{{T_1}}} + 1} \right)}}\\
&{} = \frac{{{\gamma _0}}}{{{\Gamma _{{T_1}}} + 1}} + \frac{{\frac{{{\gamma _1}}}{{{\Gamma _{{T_1}}} + 1}}\frac{{{\omega _2}{\gamma _2}}}{{{\Gamma _R} + {\omega _1}{\Gamma _{{T_1}}} + {\omega _1} + 1}}}}{{\frac{{{\gamma _1}}}{{{\Gamma _{{T_1}}} + 1}} + \frac{{{\omega _2}{\gamma _2}}}{{{\Gamma _R} + {\omega _1}{\Gamma _{{T_1}}} + {\omega _1} + 1}} + \frac{{{\Gamma _R} + 1}}{{{\Gamma _R} + {\omega _1}{\Gamma _{{T_1}}} + {\omega _1} + 1}}}}\\
&{} = {\Upsilon _{{T_1},D}} + \frac{{{\Upsilon _{{T_1},1}}{\Upsilon _{{T_1},2}}}}{{{\Upsilon _{{T_1},1}} + {\Upsilon _{{T_1},2}} + \frac{{{\Gamma _R} + 1}}{{{\Gamma _R} + {\omega _1}{\Gamma _{{T_1}}} + {\omega _1} + 1}}}}\\
&{} \simeq {\Upsilon _{{T_1},D}} + \frac{{{\Upsilon _{{T_1},1}}{\Upsilon _{{T_1},2}}}}{{{\Upsilon _{{T_1},1}} + {\Upsilon _{{T_1},2}}}}
\end{aligned}
\end{equation*}
Note that in the derivation, we have assumed that the interference signal at the relay in different phases, i.e., $I_{R,k}^{[1]}$ and $I_{R,k}^{[2]}$ are independent. This is reasonable because we consider the cases that the time duration of each phase is much longer than one codewords.

\section*{Appendix B: Proof of Lemma 1}
Since $\gamma_0$ is an exponential RV with mean $\bar \gamma_0$, it is easy to verify that $\Pr \left( {\left. {{\Upsilon _{{T_1},D}} > \gamma } \right|{\Gamma _{{T_1}}}} \right)$ can be expressed as
\begin{equation}
\Pr \left( {\left. {{\Upsilon _{{T_1},D}} > \gamma } \right|{\Gamma _{{T_1}}}} \right) = \Pr \left( {\left. {{\gamma _0} > \gamma \left( {{\Gamma _{{T_1}}} + 1} \right)} \right|{\Gamma _{{T_1}}}} \right) = \exp \left( { - \frac{\gamma }{{{\gamma _0}}}\left( {{\Gamma _{{T_1}}} + 1} \right)} \right)
\end{equation}
Inserting (10) and (43) into (9) and solving the resultant integral, one can obtain
\begin{equation}
{{\cal F}_1}\left( \gamma  \right) = \exp \left( { - \frac{\gamma }{{{{\bar \gamma }_0}}}} \right)\sum\limits_{j } {{\phi _{T_1,j}}} \frac{{{{\bar \gamma }_0}}}{{\gamma  + {{{{\bar \gamma }_0}} \mathord{\left/
{\vphantom {{{{\bar \gamma }_0}} {{\xi_{{T_1},j}}}}} \right.
\kern-\nulldelimiterspace} {{\xi_{{T_1},j}}}}}}
\end{equation}
Moreover, the conditional probability $\Pr \left( {\left. {{\Upsilon _{{T_1},D}} < \gamma ,{\Upsilon _{T_i}^m} > \gamma  - {\Upsilon _{{T_1},D}}} \right|{\Gamma _{{T_1}}},{\Gamma _{{T_R}}}} \right)$ can be rewritten as
\begin{equation}
\begin{aligned}
& \Pr   \left( {\left. {{\Upsilon _{{T_1},D}} < \gamma ,{\Upsilon _{T_i}^m} > \gamma  - {\Upsilon _{{T_1},D}}} \right|{\Gamma _{{T_1}}},{\Gamma _{{T_R}}}} \right) \\
&{} =\! \int\limits_0^\gamma  {\int\limits_{\gamma  - z}^\infty  {{f_{{\Upsilon _{T_i}^m}\left| {\left\{ {{\Gamma _{{T_1}}},{\Gamma _{{T_R}}}} \right\}} \right.}}\left( x \right)} {f_{\left. {{\Upsilon _{{T_1},D}}} \right|{\Gamma _{{T_1}}}}}\left( z \right)dxdz}  \! =\!  \int\limits_0^\gamma  {\left( {1 - {F_{{\Upsilon _{T_i}^m}\left| {\left\{ {{\Gamma _{{T_1}}},{\Gamma _{{T_R}}}} \right\}} \right.}}\left( {\gamma  - z} \right)} \right)} {f_{\left. {{\Upsilon _{{T_1},D}}} \right|{\Gamma _{{T_1}}}}}\left( z \right)dz
\end{aligned}
\end{equation}
where ${f_{{{\Upsilon _{T_i}^m} \left| {\left\{ {\Gamma_R,\Gamma_{T_1}} \right\}} \right.}}}\left( x  \right)$ is the PDF of $\Upsilon _{m}$ conditioned on $\Gamma_R$ and $\Gamma_{T_1}$. The second step is obtained by solving the integral over $x$. ${F_{{{\Upsilon _{T_i}^m} \left| {\left\{ {\Gamma_R,\Gamma_{T_1}} \right\}} \right.}}}\left( x  \right)$ is the CDF of ${\Upsilon _{T_i}^m}$ conditioned on $\Gamma_R$ and $\Gamma_{T_1}$, which can be written as
\begin{equation}
\begin{aligned}
{F_{{{\Upsilon _{T_i}^m}}\left| {\left\{ {{\Gamma _R},{\Gamma _{{T_1}}}} \right\}} \right.}} & \left( x \right) = 1 - \prod\limits_{i = 1}^2 {\left( {1 - {F_{{\gamma _{{T_1},i}}|\left\{ {{\Gamma _R},{\Gamma _{{T_1}}}} \right\}}}\left( x \right)} \right)} \\
&{} = 1 - \exp \left( { - \frac{{{\Gamma _{{T_1}}} + 1}}{{{{\bar \gamma }_1}}}x} \right)\exp \left( { - \frac{{{\Gamma _R} + {\omega _1}{\Gamma _{{T_1}}} + 1 + {\omega _1}}}{{{\omega _2}{{\bar \gamma }_2}}}x} \right)
\end{aligned}
\end{equation}
and ${f_{\left. {{\Upsilon _{{T_1},D}}} \right|{\Gamma _{{T_1}}}}}\left( z \right)$ is the PDF of $\Upsilon _{T_1,D}$ conditioned on $\Gamma_{T_1}$, which can be expressed as
\begin{equation}
{f_{\left. {{\Upsilon _{{T_1},D}}} \right|{\Gamma _{{T_1}}}}}\left( z \right) = \frac{{{\Gamma _{{T_1}}} + 1}}{{{{\bar \gamma }_0}}}\exp \left( { - \frac{{{\Gamma _{{T_1}}} + 1}}{{{{\bar \gamma }_0}}}z} \right)
\end{equation}
Substituting (10) and (45)-(47) into (9) and interchanging the integration order, we can obtain
\begin{equation}
\begin{aligned}
{{\cal F}_2}\left( \gamma  \right)  &= \frac{1}{{{{\bar \gamma }_0}}}\exp \left( { - \frac{1}{{{{\bar \gamma }_1}}}\gamma  - \frac{{1 + {\omega _1}}}{{{\omega _2}{{\bar \gamma }_2}}}\gamma } \right)  \! \sum\limits_j \!{\sum\limits_k \!{{\phi _{T_1,j}}{\phi _{R,k}}\!\int\limits_0^\gamma  {\!\exp\! \left(\! { - {\Phi _2}z} \!\right)} } }\! \!\int\limits_0^\infty \!\! {\exp\! \left( { - \left[ {\frac{{\gamma \! -\! z}}{{{\omega _2}{{\bar \gamma }_2}}} \!+\! \frac{1}{{{\xi_{R,k}}}}} \right]\!s} \!\right)} ds\\
&{} \!\times \! \int\limits_0^\infty \! {\left( {t + 1} \right)\!\exp \!\left(\! { - \left[ {{\Phi _1}z + \left( {\frac{1}{{{{\bar \gamma }_1}}} \!+\! \frac{{{\omega _1}}}{{{\omega _2}{{\bar \gamma }_2}}}} \right)\gamma  + \frac{1}{{{\xi_{{T_1},j}}}}} \right]t} \!\right)} dtdz
\end{aligned}
\end{equation}
Solving the integrals with respect to $s$ and $t$, we can yield
\begin{equation}
\begin{array}{ll}
{{\cal F}_2}\left( \gamma  \right) =  \frac{{{\omega _2}}}{{{{\bar \gamma }_0}}}\exp \left( { - \frac{1}{{{{\bar \gamma }_1}}}\gamma  - \frac{{1 + {\omega _1}}}{{{\omega _2}{{\bar \gamma }_2}}}\gamma } \right)\sum\limits_j {\sum\limits_k {{\phi _{T_1,j}}{\phi _{R,k}}\frac{{{{\bar \gamma }_0}}}{{\gamma  + {{\bar \gamma }_0}{\beta _{j,k}}}}} } \\
\times \left\{ {\left( {1 + \frac{{{{\bar \gamma }_0}}}{{\gamma  + {{\bar \gamma }_0}{\beta _{j,k}}}}} \right)} \right.\Psi \left( { - \frac{1}{{{{\bar \gamma }_2}}},\frac{1}{{{{\bar \gamma }_2}}},\frac{{{\omega _2}}}{{{\xi_{R,k}}}}} \right) + \left( {\frac{1}{{{\omega _2}}} + \frac{{{\Phi _1}{{\bar \gamma }_0}{{\bar \gamma }_2}}}{{\gamma  + {{\bar \gamma }_0}{\beta _{j,k}}}}} \right)\Psi \left( {{\Phi _1},\frac{1}{{{{\bar \gamma }_1}}} + \frac{{{\omega _1}}}{{{\omega _2}{{\bar \gamma }_2}}},\frac{1}{{{\xi_{{T_1},j}}}}} \right)\\
\left. { + \left( {\frac{{{{\bar \gamma }_2}}}{{\left( {\frac{1}{{{{\bar \gamma }_1}}} + \frac{{{\omega _1}}}{{{\omega _2}{{\bar \gamma }_2}}}} \right)\gamma  + \frac{1}{{{\xi_{{T_1},j}}}}}} - \frac{{{{\bar \gamma }_2}}}{{\frac{1}{{{{\bar \gamma }_0}}}\gamma  + \frac{1}{{{\xi_{{T_1},j}}}}}}\exp \left( { - {\Phi _2}\gamma } \right)} \right)} \right\}
\end{array}
\end{equation}
where $\Psi \left( {{\rho _1},{\rho _2},{\rho _3}} \right)$ is expressed as $\Psi \left( {{\rho _1},{\rho _2},{\rho _3}} \right) = \int_0^\gamma  {\exp \left( { - {\Phi _2}z} \right)} \frac{1}{{{\rho _1}z + {\rho _2}\gamma  + {\rho _3}}}dz$. Taking a closer look at (49), it is easy to verify that $ {{\cal F}_2}\left( \gamma  \right)$ can be rewritten as
\begin{equation}
\begin{aligned}
& {{\cal F}_2}\left( \gamma  \right) = \frac{{{\omega _2}}}{{{{\bar \gamma }_0}}}\sum\limits_{j} {\sum\limits_{k} {{\phi _{{T_1},j}}{\phi _{R,k}}} } \left( {\rm{M}}\left( {1,1, - \frac{1}{{{{\bar \gamma }_2}}},\frac{1}{{{{\bar \gamma }_2}}},\frac{{{\omega _2}}}{{{\xi_{R,k}}}}} \right) + {\rm{M}}\left( {1,2, - \frac{1}{{{{\bar \gamma }_2}}},\frac{1}{{{{\bar \gamma }_2}}},\frac{{{\omega _2}}}{{{\xi_{R,k}}}}} \right)\right. \\
&{}   \left. + {\rm{M}}\left( {\frac{1}{{{\omega _2}}},1,{\Phi _1},{\lambda _1},\frac{1}{{{\xi_{{T_1},j}}}}} \right) {+ {\rm{M}}\left( {{\Phi _1}{{\bar \gamma }_2},2,{\Phi _1},{\lambda _1},\frac{1}{{{\xi_{{T_1},j}}}}} \right) + \Lambda \left( {{\lambda _1},{\lambda _2}} \right) - \Lambda \left( {\frac{1}{{{{\bar \gamma }_0}}},\frac{1}{{{{\bar \gamma }_0}}}} \right)} \right)
\end{aligned}
\end{equation}
where $\Lambda(\rho_1,\rho_2)$ is given by (13) and ${\rm{M}}\left( {{\rho _1},{\rho _2},{\rho _3},{\rho _4},{\rho _5}} \right)$ is expressed as
\begin{equation}
\begin{aligned}
 {\rm{M}} &= {\rho _1}\exp \left( { - {\lambda _2}\gamma } \right){\left( {\frac{1}{{{{\bar \gamma }_0}}}\gamma  + {\beta _{j,k}}} \right)^{ - {\rho _2}}}\Psi \left( {{\rho _3},{\rho _4},{\rho _5}} \right) \\
&{} = {\rho _1}\exp \left( { - {\lambda _2}\gamma } \right){\left( {\frac{1}{{{{\bar \gamma }_0}}}\gamma  + {\beta _{j,k}}} \right)^{ - {\rho _2}}}\int\limits_0^\gamma  {\frac{{\exp \left( { - {\Phi _2}z} \right)}}{{{\rho _3}z + {\rho _4}\gamma  + {\rho _5}}}} dz
 \end{aligned}
\end{equation}
To solve the integral introduced by $\Psi \left( {{\rho _3},{\rho _4},{\rho _5}} \right)$, we apply Taylor series expansion $\frac{1}{{{\rho _3}z + {\rho _4}\gamma  + {\rho _5}}} = \frac{1}{{{\rho _3}}}\sum\limits_{l = 0}^\infty  {{{\left( { - 1} \right)}^l}{{\left( {\frac{{{\rho _3}}}{{{\rho _4}\gamma  + {\rho _5}}}} \right)}^{l + 1}}{z^l}} $. Then based on [28, 3.381.1], (51) can be solved as in (12). Finally, substituting (44) and (50) into (9), we can obtain the result in Lemma 1.

\section*{Appendix C: Proof of Lemma 3}
According to \cite{IEEEhowto:Wang}, the asymptotic expression can be derived by performing McLaurin series expansion on $F_{\Upsilon _{{T_1}}^{\rm{UB}}}\left( \gamma  \right)$ and taking only the first two order terms. Here the major difficulty is due to the series expression of function ${\rm{M}}\left( {{\rho _1},{\rho _2},{\rho _3},{\rho _4},{\rho _5}} \right)$. To deal with this problem, we go back to the integral expression of ${\rm{M}}\left( {{\rho _1},{\rho _2},{\rho _3},{\rho _4},{\rho _5}} \right)$ in (51). By definition, the McLaurin series expansion of ${\rm{M}}\left( {{\rho _1},{\rho _2},{\rho _3},{\rho _4},{\rho _5}} \right)$ can be expressed as
\begin{equation}
{\rm M} = {\left. {\rm M} \right|_{\gamma  = 0}} + {\left. {{{\rm M}^{(1)}}} \right|_{\gamma  = 0}}\gamma  + \frac{1}{2}{\left. {{{\rm M}^{(2)}}} \right|_{\gamma  = 0}}{\gamma ^2} + o\left( {{\gamma ^2}} \right)
\end{equation}
where ${{\rm M} ^{(n)}} = \frac{{{d^n}{\rm M} }}{{d{\gamma ^n}}}$ and $o\left( \delta  \right)$ indicates the higher order term of $\delta$. According to the result reported in [28, 0.410], ${\left. {{{\rm M} ^{(n)}}} \right|_{\gamma  = 0}}$ ($n=1,2$) can be derived directly from its integral expression (51), i.e.,
\begin{equation}
\begin{aligned}
&    {\left. {{{\rm M} ^{(1)}}} \right|_{\gamma  = 0}} = {\left[ {{{\left. {g\left( {\gamma ,z} \right)} \right|}_{z = \gamma }}} \right]_{\gamma  = 0}}\\
&{}  {\left. {{{\rm M} ^{(2)}}} \right|_{\gamma  = 0}} = {\left[ {{{\left. {\frac{{dg\left( {\gamma ,z} \right)}}{{d\gamma }}} \right|}_{z = \gamma }} + \frac{d}{{d\gamma }}\left( {{{\left. {g\left( {\gamma ,z} \right)} \right|}_{z = \gamma }}} \right)} \right]_{\gamma  = 0}}
\end{aligned}
\end{equation}
where $g\left( {\gamma ,z} \right)$ can be expressed as
\begin{equation}
g\left( {\gamma ,z} \right) = {\rho _1}\exp \left( { - {\lambda _2}\gamma } \right){\left( {\frac{1}{{{{\bar \gamma }_0}}}\gamma  + {\beta _{j,k}}} \right)^{ - {\rho _2}}}\frac{{\exp \left( { - {\Phi _2}z} \right)}}{{{\rho _1}z + {\rho _2}\gamma  + {\rho _3}}}\
\end{equation}
At last, following by some algebraic manipulation, one can arrived at the result in Lemma 3.

\section*{Appendix D: Proof of Theorem 3}
Substituting (11) into (22) and using [28, 9.211.4], $P_{{T_1}}^{{\cal E},\rm{LB}}$ can be expressed as in the form of (23) in Theorem 3, where the function ${\rm M}^{{\cal E}}(\rho_1,\rho_2,\rho_3,\rho_4,\rho_5)$ and ${\Lambda ^{\cal E}}(\rho_1,\rho_2)$ are expressed as
\begin{equation}
\begin{array}{ll}
{{\rm{M}}^{\cal E}}\left( {{\rho _1},{\rho _2},{\rho _3},{\rho _4},{\rho _5}} \right) = a\sqrt {\frac{b}{\pi }} \int\limits_0^\infty  {\frac{{\exp \left( { - b\gamma } \right)}}{{\sqrt \gamma  }}} {\rm{M}}\left( {{\rho _1},{\rho _2},{\rho _3},{\rho _4},{\rho _5}} \right)d\gamma\\
{\Lambda ^{\cal E}}\left( {{\rho _1},{\rho _2}} \right) = a\sqrt {\frac{b}{\pi }} \int\limits_0^\infty  {\frac{{\exp \left( { - b\gamma } \right)}}{{\sqrt \gamma  }}} \Lambda \left( {{\rho _1},{\rho _2}} \right)d\gamma
\end{array}
\end{equation}
Then the remaining task is to express (55) as in the form of Table I. Substituting (12) into the first line of (55) and replacing the lower incomplete gamma function with its series expansion [28, 8.354.1], i.e., ${\mathbb L}\left( {\alpha ,x} \right) = \sum_{n = 0}^\infty  {\frac{{{{\left( { - 1} \right)}^n}{x^{\alpha  + n}}}}{{n!\left( {\alpha  + n} \right)}}}$, ${\rm M}^{{\cal E}}(\rho_1,\rho_2,\rho_3,\rho_4,\rho_5)$ can be rewritten as
\begin{equation}
\begin{aligned}
&   {\rm M}^{\cal E}\left( {{\rho _1},{\rho _2},{\rho _3},{\rho _4},{\rho _5}} \right) = a\sqrt {\frac{b}{\pi }} \sum\limits_{l = 0}^\infty  {\sum\limits_{n = 0}^\infty  {\frac{{{{\left( { - 1} \right)}^{l + n}}\Phi _2^n{\rho _1}\rho _3^l}}{{n!\left( {l + n + 1} \right)}}} } \\
&{} \times \int\limits_0^\infty  {{\gamma ^{l + n + \frac{1}{2}}}} \frac{1}{{{{\left( {{\rho _4}\gamma  + {\rho _5}} \right)}^{l + 1}}}} \frac{\exp \left( { - \left( {{\lambda _2} + b} \right)\gamma } \right)}{{{{\left( {\frac{1}{{{{\bar \gamma }_0}}}\gamma  + {\beta _{j,k}}} \right)}^{{\rho _2}}}}}d\gamma
\end{aligned}
\end{equation}
Taking partial fraction on term $\frac{1}{{{{\left( {{\rho _4}\gamma  + {\rho _5}} \right)}^{l + 1}}}}\frac{1}{{{{\left( {\frac{1}{{{{\bar \gamma }_0}}}\gamma  + {\beta _{j,k}}} \right)}^{{\rho _2}}}}}$, i.e.,
\begin{equation}
\frac{1}{{{{\left( {{\rho _4}\gamma  + {\rho _5}} \right)}^{l + 1}}}}\frac{1}{{{{\left( {\frac{1}{{{{\bar \gamma }_0}}}\gamma  + {\beta _{j,k}}} \right)}^{{\rho_2}}}}}  = \sum\limits_{i = 1}^{l + 1} {{\mu _i}\frac{1}{{{{\left( {{\rho _4}\gamma  + {\rho _5}} \right)}^i}}} + } \sum\limits_{i = 1}^{{\rho _2}} {{\nu _i}} \frac{1}{{{{\left( {\frac{1}{{{{\bar \gamma }_0}}}\gamma  + {\beta _{j,k}}} \right)}^i}}}
\end{equation}
where $\mu _i$ and $\nu _i$ are given in (24), and employing equation [28, 9.211.4] on the resultant integrals, we can obtain the first row in Table I.

Similarly, substituting (13) into the second line of (55), applying partial fraction on term $\frac{1}{{\frac{1}{{{{\bar \gamma }_0}}}\gamma  + {\beta _{k,j}}}}\frac{1}{{{\rho _1}\gamma  + \frac{1}{{{\xi_{T_1,j}}}}}}$ and using [28, 9.211.4] on the resultant expression, one can yield the result of the fourth row in Table I.

\end{document}